\def\eqref#1{equation~\ref{#1}}
\def\1{\bm{1}}
\DeclareMathAlphabet{\mathsfit}{\encodingdefault}{\sfdefault}{m}{sl}
\SetMathAlphabet{\mathsfit}{bold}{\encodingdefault}{\sfdefault}{bx}{n}
\newcommand{\R}{\mathbb{R}}
\DeclareMathOperator*{\argmax}{arg\,max}
\DeclareMathOperator*{\argmin}{arg\,min}
\algnewcommand\algorithmicforeach{\textbf{for each}} % For Each loops
\newtheorem{theorem}{Theorem}
\newtheorem{theorem*}{Theorem}
\newtheorem{definition}{Definition}
\newtheorem{proof}{Proof}
\newtheorem{proof*}{Proof}
\newcommand{\indicator}{\bm{1}}
\newcommand{\TP}{\textsf{TP}}
\newcommand{\FP}{\textsf{FP}}
\DeclareMathAlphabet{\pazocal}{OMS}{zplm}{m}{n}
\newcommand{\X}{\pazocal{X}}
\newcommand{\Y}{\pazocal{Y}}
\newcommand{\XX}{\textbf{X}}
\newcommand{\YY}{Y}
\newcommand{\ZZ}{\textbf{Z}}
\renewcommand{\SS}{\textbf{S}}
\newcommand{\x}{\mathbf{x}}
\newcommand{\z}{\mathbf{z}}
\newcommand{\manip}{\text{manip}}
\newcommand{\rec}{\text{rec}}
\begin{document}

\runningauthor{Yatong Chen, Andrew Estornell, Yevgeniy Vorobeychik, Yang Liu}

% If your paper is accepted and the title of your paper is very long,
% the style will print as headings an error message. Use the following
% command to supply a shorter title of your paper so that it can be
% used as headings.
%
%\runningtitle{I use this title instead because the last one was very long}

% If your paper is accepted and the number of authors is large, the
% style will print as headings an error message. Use the following
% command to supply a shorter version of the authors names so that
% they can be used as headings (for example, use only the surnames)
%
%\runningauthor{Surname 1, Surname 2, Surname 3, ...., Surname n}

\twocolumn[

\aistatstitle{To Give or Not to Give? The Impacts of Strategically Withheld Recourse}

\aistatsauthor{ 
  Yatong Chen \And Andrew Estornell 
}
\aistatsaddress{ 
  MPI for Intelligent Systems,\\ T\"{u}bingen AI Center, T\"{u}bingen, Germany \And  Bytedance Research
}

\aistatsauthor{ 
  Yevgeniy Vorobeychik \And Yang Liu 
}

\aistatsaddress{ 
  Washington University in Saint Louis \And University of California, Santa Cruz 
}

]

\begin{abstract}

  Individuals often aim to reverse undesired outcomes in interactions with automated systems, like loan denials, by either implementing system-recommended actions (recourse), or manipulating their features.
  While providing recourse benefits users and enhances system utility, it also provides information about the decision process that can be used for more effective strategic manipulation, 
   especially when the individuals collectively share such information with each other. 
   We show that this tension leads rational utility-maximizing systems to frequently withhold recourse, resulting in decreased population utility, particularly impacting sensitive groups.
   To mitigate these effects, we explore the role of recourse subsidies, finding them effective in increasing the provision of recourse actions by rational systems, as well as lowering the potential social cost and mitigating unfairness caused by recourse withholding. 
\end{abstract}

\section{INTRODUCTION}
% explain recourse
When individuals interacting with automated systems are denied a desired outcome (e.g., loan approval), they may seek a means of reversing this decision to obtain the desired outcome. 
This procedure is commonly referred to as \emph{recourse} \citep{ustun2019actionable}. In cases where the system's decision rule is opaque (e.g., lending), the system itself is responsible for supplying individuals with recourse, i.e., a recommended feature modification that is feasible and will result in that individual being approved. 

When the feature modification changes an agent's true qualification rate (e.g., paying off debt increases one's creditworthiness), providing recourse can benefit the system. However, offering recourse actions also exposes information about the system's decision rule, as each action leads to a positively classified feature vector close to the decision boundary. This added transparency creates opportunities for strategic individuals to exploit the system's decision rule by manipulating their features, especially when they share their knowledge about the decision rule with one another. For example, platforms like \emph{GradCafe} for graduate school admissions and \emph{LendingClub} for loan applications allow agents to see other applicants' features. This enables them to potentially \emph{misreport} their features to mimic those of others, thereby leveraging publicly available information to their advantage \citep{bechavod2022information,chen2020linear,estornell2021unfairness,hardt2016strategic,vorobeychik2023many}. 
Such feature manipulation can often reduce both system and social utility since it will increase the false positive rate. 
This creates a tension in providing recourse, where the utility gained from increased qualifications must be balanced against the utility lost due to manipulation that exploits the counterfactual information in recourse recommendations
The consequence of this tension is that in many settings, providing recourse to all, or even most, of the agents may be suboptimal from a system's perspective.
This sharply contrasts with the common assumption in the algorithmic recourse literature, which typically considers agents taking recourse actions without the possibility of manipulation. 

On the other hand, we can consider subsidies as a means to incentivize systems to offer recourse actions. Subsidy \citep{hu2019disparate}, or government incentive, is a type of government expenditure to financially help individuals, households, and businesses in various settings. Consider the \emph{small business administration (SBA) microloan program} \footnote{https://www.hud.gov/program-offices/housing/fhahistory} in the United States as a motivating example. This program provides small loans to startups and small businesses and offers technical assistance and financial training to help borrowers succeed. In this work, we model subsidies that lower each individual's recourse costs, requiring the agent to pay only a fraction of the original amount. Compared with using penalty to disincentivize manipulation \citep{blocki2013audit} or using auditing to incentivize recourse taking \citep{estornell2023incentivizing}, the main benefit of subsidies is that it requires no verification power from the system, reducing the potential harm caused by unintentionally impose large fines on truthful agents. We add a more detailed discussion in \cref{sec:related-work}.

\paragraph{High-Level Overview of Our Model}
There are two parties in our setting: a utility-maximizing recourse system and a set of agents. Each agent is represented by a feature vector $\mathbf{x} \in \X$. The system trained a \emph{fixed}, potentially opaque function $f: \X \rightarrow [0, 1]$ to decide who to provide a resource (e.g., loan) based on $\x$. For negatively classified agents, the system decides whether to provide a recourse action or not. The central tension comes from the fact that agents can both (1) lie about their features and manipulate them to some publicly known positively classified features and (2) take the recommended recourse actions that change their true features. Only the latter leads to an increase in the system's utility. 
The publicly known features come from either agents who are already classified positively, or agents who successfully obtain a recourse action from the system. The latter is more within the system's control and could be an easier target for manipulation, as they are more likely to be closer to the decision boundary. Thus, the system's main tool is to strategically withhold recourse actions from some agents to maximize their utility. Based on the relative cost of recourse and manipulation, agents choose to take the recommended action or manipulate known positively classified features. See \Cref{fig:system-agent-interaction} for a demonstration of our modeling framework.

\paragraph{Main Results}
We show that in many cases, the system is incentivized to strategically withhold recourse from most if not all, agents to prevent manipulation. To our knowledge, this is the first work to challenge the assumption that a utility-maximizing recourse system will naturally provide recourse without third-party intervention (e.g., government regulation). As fewer agents receive recourse, the \emph{social cost}—the average cost to achieve positive classification—rises. Withholding recourse also limits legitimate paths to positive classification, pushing more individuals toward manipulation. This burden often falls disproportionately on disadvantaged groups, worsening existing inequalities. To address this, we explore recourse subsidies, a third-party payment that reduces recourse costs, and find them effective in increasing recourse providing, reducing social costs, and mitigating unfairness.

The details for reproducing our experimental results can be found at {\small \texttt{\url{https://github.com/UCSC-REAL/Strategic-withheld-recourse}}}.

\section{RELATED WORKS}
\label{sec:related-work}
Our work is closely related to the literature on algorithmic recourse, strategic classification, and fairness in general. Due to the page limit, additional related work on fairness and social cost in strategic classification and recourse \citep{gupta2019equalizing, vonkügelgen2022fairness, ehyaei2023robustness, estornell2021unfairness}, transparency \citep{barsotti2022transparency, akyol2016price} and others can be found in \Cref{sec:additional-related-work}.

\paragraph{Recourse} 
Much of the line of algorithmic recourse \citep{ustun2019actionable, venkatasubramanian2020philosophical,karimi2020survey,gupta2019equalizing,karimi2020algorithmic, vonkugelgen2020fairness,chen2020linear, harris2022bayesian} focuses on the setting where the requested recourse is guaranteed to be provided out of ethical consideration \citep{venkatasubramanian2020philosophical}.
Our work is the first to challenge this fundamental assumption and argue that without a third-party's intervention, a utility-maximizing algorithmic recourse system may be incentivized to withhold recourse from some agents to prevent manipulations strategically. We point the reader to \cite{karimi2020survey} for a more detailed discussion of the concepts and recent development of algorithmic recourse. 

\paragraph{Strategic Classification} Strategic classification focuses on the problem of how to effectively make predictions in the presence of agents who behave strategically to obtain desirable outcomes \citep{hardt2016strategic,chen2018strategyproof,tsirtsis2019optimal,levanon2021strategic,dong2018strategic,chen2018strategyproof, zrnic2021leads}. In this work, we use the standard game-theoretic Stackelberg model proposed in \cite{hardt2016strategic} to simulate the agent’s best response actions when choosing between recourse and manipulation. Our work considers the \emph{imitation-based} manipulations: 
agents do not know the classifier $f$ but are aware of a set of positively classified features and can misreport their feature by imitating another agent's feature that is positively classified. Such copycat behavior has been well-known in the literature of game theory, the behavioral economy, and strategic classification, e.g., \citep{bechavod2022information, barsotti2022transparency}. 
While most of this line of work focuses on agents being strategic and could potentially modify their features to get a favorable prediction outcome, our work focuses on when the system is being strategic and potentially withholds recourse to the agents. 

\paragraph{Subsidy, Penalty, and Auditing} Our work relates to interventions aimed at (dis)incentivizing strategic behaviors. Most relevant is \cite{hu2019disparate}, who also studies strategic behavior using subsidies. Penalties for misreporting \citep{hardt2016strategic, blocki2013audit} offer another way to discourage manipulation, encouraging agents to pursue recourse instead. Both subsidies and penalties can be viewed as tools to shift the balance between the cost of recourse and manipulation — penalties raise the cost of manipulation, while subsidies lower the cost of recourse. \citet{estornell2023incentivizing} explores auditing as an intervention to promote recourse, assuming universal recourse availability. The implementation of penalties requires verification power, such as in tax systems where cross-checking reported income deters misreporting. Subsidies, however, could be financed by third-party entities like governments or financial institutions. Incentivizing recourse through penalties is not ideal, as verification can lead to false positives, unfairly penalizing truthful agents. Audit-based systems \citep{estornell2023incentivizing} typically impose large fines, which can harm innocent agents if they are wrongly identified as manipulators. Subsidies avoid this issue. If the system controls audits and subsidies alone, it will prioritize its utility, which may not always align with the population's best interests.

\section{PRELIMINARIES}
\label{sec:preliminaries}

Let $\X\subset \mathbb{R}^d$ and $\Y\equiv\{0, 1\}$ be a domain of features and labels respectively. Let $f:\X\rightarrow \Y$ be a fixed binary classifier.
A population of agents with features $\XX = \{x: x \in\X\}$ and labels $Y = \{y: y\in \Y\}$ are classified by $f$, which is unknown to the agents; all agents desired to be positively classified (e.g., all loan applicants desire approval).
Denote the domain of negatively classified features as $\X_{-}\subseteq \mathbb{R}^d$ and the domain of positively classified features as $\X_{+}\in \mathbb{R}^d$, i.e. $f(\x) = 0$ for all $\x\in\X_{-}$ and $f(\x)=1$ for all $\x\in\X_{+}.$
All agents prefer positive classification over negative classification. Agents who have features $\x\in\X_{-}$ have two means of obtaining positive classification in the next step, \emph{recourse} and \emph{manipulation}, which are defined next.
\paragraph{Recourse} 
Recourse provides agents who received undesirable outcomes with recommended actions to genuinely improve their outcome by modifying their attributes \citep{ustun2019actionable}. 
Let $c_R: \X \times \X \rightarrow \R_{+}$ be the cost of recourse, i.e. an agent with true features $\x$ pays cost $c_R(\x, \x')$ when modifying their features to be $\x'$. 
An agent with true feature $\x\in\X_{-}$ has an \emph{optimal} recourse action
\footnote{Throughout the paper, we will interchangeably use the terms 'recourse action' and 'recourse feature.' They both refer to the feature vector that will be classified positively after the agent's taking a particular recourse action. In other words, we assume that whenever an agent reveals their recourse \emph{action}, it also reveals their original feature vector, which is equivalent to revealing the feature vector that corresponds to the vector\emph{after} the agent performs recourse.},
\begin{align}\label{eq:opt_rec}
    \x_R(\x)  = \operatorname{argmin}_{\x'\in \X_+}c_R(\x, \x')\\
    \text{s.t.}~f(\x') = 1,~~ \x'\in A(\x)  \nonumber
\end{align}
where $A(\x)$ represents the set of features an agent with true features $\x$ can feasibly obtain, i.e., the \emph{actionable} recourse actions provided by the system.
When agents perform recourse, both their true features and true qualification rate change, i.e., their true features become $\x_R(\x)$, and their true qualification rate changes from $\Pr(y = 1|\x)$ to $\Pr(y = 1|\x_R(\x))$.

\paragraph{Manipulation} In addition to recourse, agents can also perform manipulations. Following \cite{barsotti2022transparency}, we focus on \emph{imitation-based} manipulations: 
agents do not know the classifier $f$, but are aware of a set of publically revealed positively classified features $\ZZ \subseteq \X_+$ (defined below) and can misreport their feature by imitating another agent’s feature that is positively classified and is publically revealed. For a manipulation cost function $c_M: \X \times \X \rightarrow \mathbb{R}_{+}$ the \emph{optimal} imitation-based manipulation for an agent with true feature $\x$ is
\begin{align}\label{eq:opt_manip}
    \x_M (\x) = \arg\min_{\x'\in \ZZ}c_M(\x, \x')
\end{align}

Different from recourse, manipulation is simply a misreport rather than a change of one’s features, thus it does not change $\Pr[y = 1|\x]$. However, since the system only observes the reported features before classification, it does not know whether a report is truthful.

\paragraph{Feature Disclosure and Publicaly Revealed Set $\ZZ$}
We model the set of publicly revealed features $\ZZ\subseteq \X_+$ resulting from agents sharing information with each other.
In particular, $\ZZ$ 
consists of features that may come from two sets: 1) the revealed recourse actions recommended by the system (i.e., $\z\in \XX_R$ where $\XX_R =\{\x_R(\x), \x\in \XX_{-}\}$),
and 2) the set of initial positively classified features (i.e., $\z\in \XX_{+}$). Each element is made public with a \emph{fixed} probability $p\in [0, 1]$, and all publicly revealed elements make the reveal set $\ZZ$. 
We represent the set of recourse actions that are actually revealed as $\ZZ_R = \{\z \in \XX_R : \text{Reveal}(\z) = 1\}$. Here, $\text{Reveal}(\z)$ is a random indicator function that equals 1 with probability $p$ (indicating that feature $\z$ is revealed) and 0 otherwise. Similarly, let $\ZZ_+$  represent the positively classified features that are actually revealed: $\ZZ_+ = \{\z \in \XX_+ : \text{Reveal}(\z) = 1\}$. As a result, $\ZZ = \ZZ_R \cup \ZZ_+$.  

This captures real-life scenarios where negatively classified agents collectively gather information about classifier $f$ by observing positively classified peers or those who obtained recourse. Revealed recourse features are particularly crucial as they lie near the decision boundary, making them more likely targets for manipulation than general positive features.

\section{INTERACTION BETWEEN AGENTS AND THE SYSTEM}
\label{sec:interaction}
\begin{figure}
    \centering
    \includegraphics[width=\linewidth]{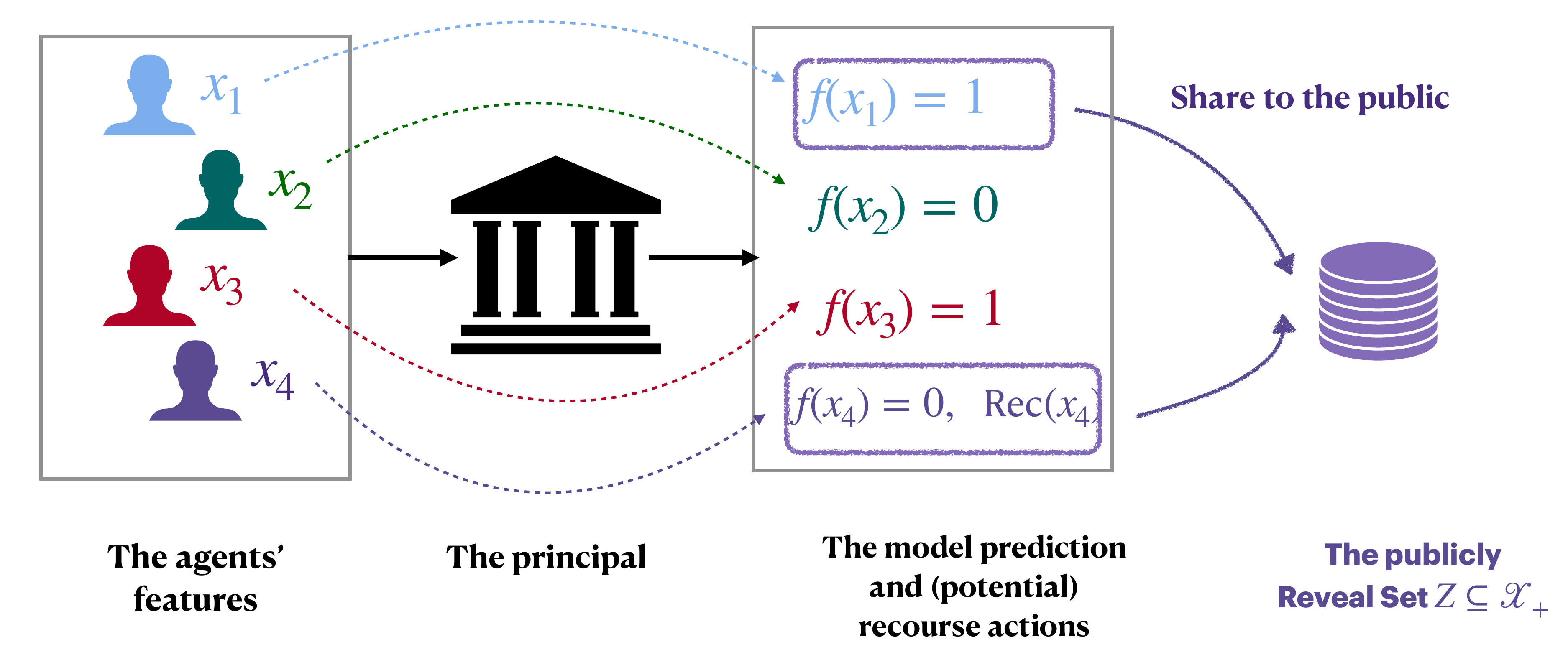}
    \caption{\emph{Demonstration of our modeling framework. Agents arrive simultaneously, and the system trains a classifier $f: \X \rightarrow \Y$ for maximum prediction accuracy. Negatively classified agents request recourse, and the system selects agents for recourse provision to maximize utility (\Cref{eq:sys_obj}). Positively classified agents and those provided recourse have a probability $p \in [0, 1]$ to reveal features, contributing to the publicly revealed set $\ZZ \subseteq \X_+$. Upon observing $\ZZ$, agents execute final actions based on \Cref{eqn:final-action}.}}
    \label{fig:system-agent-interaction}
\end{figure}
Unlike the traditional recourse setting, where the system is expected to provide recourse to any individual upon request, without external regulation (e.g., government mandates requiring banks to offer recourse), a utility-maximizing system may have incentives to withhold recourse to prevent strategic manipulation by agents. In this section, we introduce our modeling framework to capture these dynamics.

\paragraph{A Motivating Example}  
\emph{A bank publishes a classifier to determine who qualifies for a credit card. Each applicant (with feature vector $\x$) is approved if the bank’s model $f$ predicts they can repay their loan. For applicants denied a card, the bank may offer recourse, i.e., a plan to improve their creditworthiness, such as paying off debt or increasing their income. These recourse actions are provided through specific programs, such as financial classes.}
\emph{Agents also have access to an online forum where some applicants share their approved loan or recourse features. With knowledge of both recourse actions and the forum, some agents may misreport their features to match positively classified ones in an attempt to gain approval without actually taking the recommended recourse actions. As a result, the bank may have an incentive to limit recourse to individuals whose features are harder to manipulate (e.g., features that are easier for the bank to verify).}

We now formalize the dynamics between the recourse system and the agents. 

\textbf{System:} The system trains a classifier $f: \X \rightarrow \Y$ to maximize the prediction accuracy: 
\[
    f = \argmax_{f\in \cal{F}} \sum_{x\in \XX} \mathbbm{1} [f(\x) = y]
\]
A collection of negatively classified agents with features $\XX_{-} \subseteq \X_{-}$ will request recourse actions from the system after receiving their prediction outcome. The system first computes optimal recourse actions for all negatively classified agents but only chooses to \emph{release} a subset of those recourse actions $\ZZ_R \subset \XX_{R}$ to the public to maximize its utility, i.e., $\text{TP} - \text{FP}$: 

\begin{align}
\label{eq:sys_obj}
\small
\max_{\ZZ_R\subset \XX_R}&   ~~~\underbrace{\TP({\SS}) - \FP(\SS)}_{\text{system's utility}}\\
 \text{s.t.}&~~ \underbrace{\SS = \{\z(\x, \ZZ): \x\in \XX\}}_{\substack{\text{agent's reported features (Eq \ref{eqn:final-action})}}} \\ &\underbrace{\ZZ = \ZZ_R \cup \ZZ_{+}}_{\substack{\text{all publicly revealed features}}} \nonumber
\end{align}
Here, $\TP(\SS)$ and $\FP(\SS)$ are the true positive and false positive rates on the set of features after the agent's final actions.
We assume that the system either knows $c_R$ and $c_M$, or can reasonably approximate these cost functions when optimizing their objective. Intuitively, this definition of system utility reflects a bank gaining a utility of $1$ for each repaid loan and $-1$ for each defaulted loan.

\textbf{Agents:} Agents who are negatively classified will request a recourse action from the system. Upon seeing the publically revealed features $\ZZ$ defined in \cref{sec:preliminaries}, agents who are provided with a recourse action adapt their features from $\x$ to $\z = \x_M(\x)$ or $\z = \x_R(\x)$ such that $f(\z) = 1$, while minimizing the cost of the corresponding action. When both the recourse and manipulation actions are greater than $1$\footnote{The strategic agent's utility for adapting their feature from $x$ to $x'$ is determined by the standard utility function in the literature of strategic classification (see, e.g., \cite{hardt2016strategic}), which is $U(x, x') = f(x') - c(x, x')$. Thus, when the cost of adaptation $c(x, x')\geq 1$, the utility will be less than 0, in which case, the agent does nothing.}, the agents will choose to stay with their original features $\x$, which corresponds to the \emph{do-nothing} action. Agents who are not provided with a recourse action will choose to manipulate or \emph{do nothing}. The final action for already positively classified agents is always the \emph{do-nothing} action.

\textbf{Agent's best response:} Denote $\zeta_\x\in \{0, 1\}$ as an indicator for whether agent $\x$ is provided with a recourse or not (i.e., $\zeta(\x) = 1$ when provided with a recourse action). Then for all agents with $f(\x)=0$, their final action is:
\begin{align}
\scriptsize
\label{eqn:final-action}
\z(\x, \ZZ) = 
\begin{cases}
    \x_R(\x) & \zeta_\x=1 \text{ and } c_R(\x, \x_R(\x)) < \min(1, c_M(\x, \x')), \\
             & \forall \x' \in \ZZ \\
    \x_M(\x) & \zeta_\x=1 \text{ and } c_M(\x, \x_M(\x)) < \min(1, c_R(\x, \x')), \\
             & \forall \x' \in \ZZ, \text{ or } \zeta_\x=0 \text{ and }~c_M(\x, \x_M(\x)) < 1 \\
    \x       & \zeta_\x=1  \text{ and } c_R(\x, \x_R(\x)), c_M(\x, \x_R(\x)) \geq 1, \\
             & \forall \x' \in \ZZ, \text{ or } \zeta_\x=0 \text{ and } c_M(\x, \x_M(\x)) \geq 1 
\end{cases}
\end{align}
%%%%%%% summarize the dynamics  %%%%%%% 
\textbf{Summary of System-Agent Interaction}:
\begin{enumerate}
\item Agents arrive simultaneously, and the system trains a classifier $f: \X \rightarrow \Y$ for maximum prediction accuracy.
\item Negatively classified agents request recourse, and the system selects agents for recourse provision to maximize utility (\Cref{eq:sys_obj}).
\item Positively classified agents and those provided recourse have a probability $p \in [0, 1]$ to reveal features, contributing to the publicly revealed set $\ZZ \subseteq \X_+$.
\item Upon observing $\ZZ$, agents execute final actions based on \Cref{eqn:final-action}.
\end{enumerate}
Our framework is intended to capture settings where black box models are used for decision-making. Any agent subjected to the decision rules will not have direct access to the model but will still act in their best interest. In these opaque settings, recourse proposed by the system naturally offers a way for agents to learn more about the decision rule, thus increasing their ability to game the system.

The following two definitions introduce key metrics that will be used throughout this paper -- the recourse rate quantifies the proportion of negatively classified agents who opt to take recourse actions when presented with a disclosed feature set. The manipulation rate captures the fraction of negatively classified agents who choose to manipulate their features under the same conditions:

\begin{definition}
\label{def:recourse-ratio}
    (Recourse Rate) Let $\XX_{-}$ be the set of features of  negatively classified agents.
    For a given set of disclosed features (i.e., recourse actions) $\ZZ$, the \emph{recourse rate} $\rec(\ZZ, \XX_{-})$ is defined as the fraction of agents who choose to perform recourse when shown $\ZZ$:
    \begin{align*}
    \resizebox{.95\linewidth}{!}{$
        \rec(\ZZ, \XX_{-}) = \frac{\underset{\x\in\XX_{-}}{\sum} \mathbbm{1}
        \big[
        \underset{\z'\in\ZZ}{\min}~c_R(\x, \z') < \min\big(1,~\underset{\z''\in\ZZ}{\min}~c_M(\x, \z'')\big) 
        \big]}{|\mathbf{X}_{-}|}$}
    \end{align*}
\end{definition}

\begin{definition}\label{def:manip-rate}
    (Manipulation Rate) Let $\XX_{-}$ be the set of features of the negatively classified agents.
    For a given set of disclosed features (i.e., recourse actions) $\ZZ$, the \emph{manipulation rate} $\manip(\ZZ, \XX_{-})$ is defined as the fraction of the $n$ agents which choose to manipulate when shown features $\ZZ$:
    \begin{align*}
    \resizebox{.95\linewidth}{!}{$
    \manip(\ZZ, \XX_{-}) = 
        \frac{\underset{\x\in \XX_{-}}{\sum} \mathbbm{1} \big[\underset{\z'\in\ZZ}{\min}~c_M(\x, \z') < \min\big(1,~\underset{\z''\in\ZZ}{\min}~c_R(\x, \z'')\big) \big]}{|\mathbf{X}_{-}|}$}
    \end{align*}
\end{definition}

\iffalse
For each specific positively classified feature/criterion $\z\in Z$, we can also define how much it contributes to recourse vs. manipulation in the following ways:

\begin{definition}
\label{def:rec-man-ratio}
    (The recourse over manipulation ratio at time $t$ for criterion $\z$): it measures the total number of agents that take $\z$ as recourse criteria v.s. the total number of agents that take $\z$ as manipulation criteria given the revealed set $Z$: 
    \begin{align*}
        \text{r}_\text{rec/man}(\z, Z) = \frac{\sum_{\x \in \XX} p_\x(\x)\cdot \mathbbm{1}[c_R(\x, \z)< \min_{\z'\in Z}c_M(\x, \z')]}{\sum_{\x \in \XX} p_\x(\x)\cdot \mathbbm{1}[c_R(\x, \z)\geq \min_{\z'\in Z}c_M(\x, \z')]}
    \end{align*}
\end{definition}

\begin{definition}
\label{def:rec-man-diff}
  (The recourse over manipulation difference at time $t$ for criterion $\z$): the recourse \emph{minus} manipulation difference measures the total number of agents that takes $\z$ as recourse criteria minus the total number of agents that takes $\z$ as manipulation criteria given the observed set $Z$:     
    \begin{align*}
        \text{d}_\text{rec-man}(\z, Z) = \sum_{\x \in \XX}p_\x(\x)\cdot \left(\mathbbm{1}[c_R(\x, \z)< \min_{\z'\in Z}c_M(\x, \z')] - \mathbbm{1}[c_R(\x, \z)\geq \min_{\z'\in Z}c_M(\x, \z')]\right)
    \end{align*}
\end{definition}
\fi
\section{SYSTEM UTILITY}
\label{sec:system}
Recall from the previous section, the system aims to select a set ${\ZZ}_R\subseteq \XX_{R}$ to reveal as recourse recommendations simultaneously to maximize its utility (\Cref{eq:sys_obj}). 
We can first show that this problem is NP-hard (\cref{thm:nonseq_hard} in \cref{sec:np-hard}).
Despite the hardness of this objective, the system's utility is \emph{submodular} in the set of provided recourse actions (\cref{thm:submodular-objective-function} in \cref{sec:submodular}). This characteristic enables the system to employ standard submodular optimization techniques to approximately get the optimal recourse actions to disclose to $k$ agents.

We can show that in expectation, the system benefits from agents \emph{taking} recourse actions: 

\begin{theorem}(System's Expected Utility Changes)
\label{thm:expected-utility-change}
    The system's expected utility (defined in \cref{eq:sys_obj}) increases for each recourse action \emph{taken} by agents and decreases for every manipulation action \emph{taken} by agents.
    When the classifier used by the system is better than random guessing, which means that $f(x) = 1$ implies $\Pr[y(x) = 1| X = x]\geq 0.5$, then the system's utility is monotonically increasing in each recourse action taken by an agent in expectation but will be monotonically decreasing in each manipulation action taken by an agent.
\end{theorem}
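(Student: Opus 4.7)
My plan is to decompose the system's utility agent-by-agent and then compute, for a single agent, the expected change in $\TP-\FP$ produced by a recourse action and by a manipulation action respectively. Since both $\TP$ and $\FP$ are sums over agents and expectation is linear, monotonicity in the number of recourse/manipulation actions will follow immediately from the per-agent sign of each contribution.

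First I would fix one agent with true feature $\x$ and reported feature $\x'$ and note that this agent's expected contribution to $\TP-\FP$ equals $f(\x')\cdot\big(\Pr[y=1\mid \x]-\Pr[y=0\mid \x]\big)=f(\x')\cdot\big(2\Pr[y=1\mid\x]-1\big)$, because the true label is drawn from $\Pr[y\mid \x]$ (not $\Pr[y\mid\x']$) while classification is made on the reported $\x'$. In particular, any agent whose reported feature is classified negatively contributes $0$ in expectation; this covers the baseline before any recourse or manipulation action is taken by an agent originally in $\XX_{-}$.

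Next I would handle recourse. When an agent with $\x\in\XX_{-}$ takes the recommended recourse action, by definition both the true and reported features become $\x_R(\x)$ with $f(\x_R(\x))=1$, so the expected contribution changes from $0$ to $2\Pr[y=1\mid\x_R(\x)]-1$. By the hypothesis $f(\x')=1\Rightarrow\Pr[y=1\mid\x']\geq 1/2$, this change is nonnegative. Summing over all agents who take recourse and invoking linearity of expectation gives that the expected system utility is monotonically nondecreasing in the number of recourse actions actually taken.

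Then I would handle manipulation symmetrically. When an agent with $\x\in\XX_{-}$ manipulates to $\x_M(\x)\in\ZZ\subseteq\X_+$, the reported feature becomes positively classified while the \emph{true} feature (and hence the label distribution) remains $\x$; the expected contribution thus changes from $0$ to $2\Pr[y=1\mid\x]-1$. Because $f(\x)=0$, the contrapositive form of the ``better than random'' hypothesis (namely $\Pr[y=1\mid\x]\leq 1/2$, which is the natural companion to the stated implication, and which I would state explicitly at the outset as part of what ``better than random'' means on $\X_{-}$) gives that this change is nonpositive, yielding monotonic non-increase in the number of manipulation actions. The main obstacle is purely expository: making precise that manipulation alters only the report, not $\Pr[y\mid\cdot]$, whereas recourse alters both, and stating the two-sided reading of ``better than random'' cleanly so that the negatively classified case is covered.
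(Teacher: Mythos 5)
Your proposal is correct and follows essentially the same route as the paper's proof: both compute the per-agent expected change in $\TP-\FP$, obtaining $2\Pr[y=1\mid \x_R(\x)]-1\geq 0$ for recourse (since the new true feature is positively classified) and $2\Pr[y=1\mid\x]-1\leq 0$ for manipulation (since the true feature remains negatively classified), then sum by linearity. Your explicit remark that the ``better than random'' hypothesis must also be read on the negatively classified side ($f(\x)=0\Rightarrow\Pr[y=1\mid\x]\leq 1/2$) is a point the paper uses implicitly but does not state.
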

However, this does not imply that the system is always incentivized to provide as many recourse actions as possible, since agents might not always take them if they collude, which creates a natural misalignment between the system's utility and recourse offering for the system. 

\section{COST OF STRATEGICALLY WITHHOLDING RECOURSE SYSTEM}
\label{sec:cost}
Having shown that the system could be incentivized to withhold recourse from the agents,
we now study the consequence of such withholdings by examining the social cost and unfairness as a result of the system's strategic actions.

\begin{definition}
\label{def:social-cost}
(Social Cost of a Strategically Withhold Recourse) Given a publically revealed set $\ZZ \subseteq \X_+$, 
    the \emph{social cost} refers to the additional cost agents must pay as a result of the system withholding recourse.
    Denote $\x_R(\x)$ as the optimal recourse action provided by a non-strategic system, and $\z_R(\x, \ZZ)$ as the recourse action that the agent takes given the revealed set $\ZZ$, then the social cost of a strategically withholding recourse system is defined as:
        \begin{align*}
        \small
          & \text{cost}(\ZZ, \XX_{-}) 
          = \sum_{\x\in \XX_{-}}\big(c_R(\x, \z_R(\x, \ZZ)) - c_R(\x, \x_R(\x))\big)
        \end{align*}
\end{definition}
where $\z_R(\x, \ZZ) = \argmin_{\z\in \ZZ} c_R(\x, \z)$. For the remainder of our results, we focus on univariate classifiers, i.e., the feature $\x$ is one-dimensional.
There is a natural correspondence between univariate and multivariate classifiers in the sense that one can imagine the space of single-dimensional features as the scores produced a multi-dimensional classifier $f(\x)$ \footnote{This follows similarly to Lemma 3.1 in \cite{milli2019social}.}.
That is, in the case when $f(\x) = [h(\x) \geq \theta]$ for some score function $h$ and threshold theta, we can view $f$ as a single dimensional classifier acting on the space of scores produced by $h$. 

We also measure the disparities of different social groups in terms of their differences in 1) recourse ratios (defined in \Cref{def:recourse-ratio}), and 2) social cost (defined in \Cref{def:social-cost}). Understanding the disparities in terms of recourse rate and social cost among different groups is crucial for addressing issues of unfairness in an algorithmic recourse system \cite{gupta2019equalizing, vonkügelgen2022fairness}. These disparities often reflect systemic biases and inequalities, impacting marginalized communities disproportionately. 
In particular, assume there are two groups of agents $\XX^{(g_0)}$ and $\XX^{(g_1)}$, where $g_0, g_1$ represents their group memberships, we are interested in the following quantities:

\begin{definition}(Disparity in Social Cost and Recourse Ratio) The disparity in social cost and recourse ratio for two groups $g_0, g_1$ are defined as:  
    \begin{align*}
    \resizebox{.91\linewidth}{!}{$
    \textit{Diff}^{(\text{cost})}(\ZZ, \XX^{(g_0)}, \XX^{(g_1)}):= \left|\text{cost}(\ZZ, \XX_{-}^{(g_1)}) - \text{cost}(\ZZ, \XX_{-}^{(g_0)})\right|$},\\
    \resizebox{.91\linewidth}{!}{$\textit{Diff}^{(\rec)}(\ZZ, \XX^{(g_0)}, \XX^{(g_1)}):= \left|\rec(\ZZ, \XX_{-}^{(g_1)}) - \rec(\ZZ, \XX_{-}^{(g_0)})\right|$}
    \end{align*}   
\end{definition}

In the experiments section, we demonstrate that these disparities can be quite common across different datasets (see \Cref{fig:lr-rec-social-cost-diff}). By quantifying and illuminating these disparities, we gain crucial insights into the specific mechanisms of inequity and injustice within algorithmic recourse systems.

\section{THE EFFECT OF SUBSIDIES}
\label{sec:subsidy}
To remedy the adverse population- and group-level impacts previously observed, we investigate the use of subsidies (rigorously defined next) and their impact on recourse rate, social cost, and unfairness we defined in the previous section.  
Subsidies correspond to a global decrease in the cost of recourse. 
For example, free educational material on financial literacy distributed to any agent petitioning the bank for recourse will increase the ease at which that agent can perform recourse actions. 
\begin{definition}(Subsidies)\citep{hu2019disparate} A subsidy $0 \leq \alpha \leq 1$ is a scalar decrease to the cost of recourse. For subsidy $\alpha$, agents performing recourse pay only $(1 - \alpha)\cdot c_R(\x, \x')$ instead of the full cost of $c_R(\x, \x') $. We denote $c_R(\x, \x'; \alpha) = (1 - \alpha)\cdot c_R(\x, \x')$ as the new recourse cost at subsidy level $\alpha$. 
\end{definition}
%\vspace{-0.1in}
%\subsection{The Effect of Subsidy on Recourse Rate, Social Cost and Unfairness}
% In the previous section, we observed that when a utility-maximizing system recognizes the potential for agents' manipulative behavior, it may strategically withhold recourse. This action can lead to increased social costs and unfairness for the agents.  
%In this section, 
Next, we demonstrate how subsidies can help increase the recourse rate (Theorem \ref{thm:sub-rec-rate}) and system's utility (Theorem \ref{thm:sub-utility}).  Additionally, subsidies can mitigate disparities in recourse rate differences (Theorem \ref{thm:sub-recourse-rate-diff}) and social cost differences (Theorem \ref{thm:sub-social-cost-diff}) among various groups.

% \subsubsection{Subsidy with Recourse Rate}
%\paragraph{Subsidies and Recourse Rate}
We first show how subsidies influence the recourse rate. 
Recall that subsidy reduces the cost of recourse from $c_R(\x, \x')$ to $c_R(\x, \x';\alpha)$. 
With that, the recourse rate becomes:
\begin{align*}
    &\rec(\ZZ, \XX_{-}; \alpha) \\
    =& \frac{\underset{\x\in\XX_{-}}{\sum} \mathbbm{1}
        \bigg[
        \underset{\z'\in\ZZ}{\min}~c_R(\x, \z'; \alpha) < \min\big(1,~\underset{\z''\in\ZZ}{\min}~c_M(\x, \z'')\big) 
        \bigg]}{|\XX_{-}|}.    
\end{align*}

The key observation here is that with subsidy $\alpha$, the recourse cost reduces, but the manipulation cost remains the same. Both optimal recourse actions $\x_R(\x)$ and the optimal manipulation action $\x_M(\x)$ remain the same. With that, we can show that the recourse rate is a monotonic function in subsidy -- as the subsidy level increases, the recourse rate will also increase:
\begin{theorem*}(Subsidy Influence on Recourse Rate)
    \label{thm:sub-rec-rate}
    Given a reveal set $\ZZ$, the recourse rate $\rec(\ZZ, \XX_{-}, \alpha)$ is a monotonically 
    % non-decreasing 
    increasing
    function of subsidies $\alpha$.
\end{theorem*}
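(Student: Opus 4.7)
The plan is to prove monotonicity pointwise in $\x$ and then aggregate, exploiting the fact that a subsidy scales $c_R$ but leaves $c_M$ untouched. Concretely, I would fix an arbitrary agent $\x \in \XX_{-}$ and compare the value of the indicator inside the sum at two subsidy levels $\alpha_1 \le \alpha_2$ in $[0,1]$.

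First, I would observe that $c_R(\x, \x'; \alpha) = (1-\alpha)\, c_R(\x, \x')$ is (weakly) decreasing in $\alpha$ for every fixed pair $(\x, \x')$, since $c_R \ge 0$. Because a pointwise decreasing family of functions has a pointwise decreasing infimum, it follows that
\begin{equation*}
\min_{\z'\in \ZZ} c_R(\x, \z'; \alpha_2) \;\le\; \min_{\z'\in \ZZ} c_R(\x, \z'; \alpha_1).
\end{equation*}
Meanwhile, the right-hand side of the comparison inside the indicator, namely $\min\bigl(1,\; \min_{\z''\in \ZZ} c_M(\x, \z'')\bigr)$, does not depend on $\alpha$ at all, since subsidies only alter the recourse cost. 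Hence if the strict inequality defining the indicator holds at $\alpha_1$, it continues to hold at $\alpha_2$.

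Therefore, for every $\x \in \XX_{-}$, the indicator
\begin{equation*}
\mathbbm{1}\Bigl[\min_{\z'\in \ZZ} c_R(\x, \z'; \alpha) < \min\bigl(1, \min_{\z''\in \ZZ} c_M(\x, \z'')\bigr)\Bigr]
\end{equation*}
is a nondecreasing function of $\alpha$. Summing over $\x \in \XX_{-}$ and dividing by $|\XX_{-}|$ preserves this monotonicity, yielding $\rec(\ZZ, \XX_{-}; \alpha_1) \le \rec(\ZZ, \XX_{-}; \alpha_2)$, which is exactly the claim.

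There is no real obstacle here: the argument is essentially a one-line pointwise comparison plus aggregation. The only subtle point worth flagging in the writeup is that the optimal recourse target $\x_R(\x)$ and optimal manipulation target $\x_M(\x)$ themselves do not shift under scaling of $c_R$ (since $(1-\alpha)$ is a nonnegative constant multiplier, the $\arg\min$ in \eqref{eq:opt_rec} and \eqref{eq:opt_manip} are preserved), so the comparison reduces cleanly to comparing two scalars at the fixed optima, rather than having to track a moving minimizer as $\alpha$ varies.
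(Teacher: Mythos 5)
Your argument is correct and is essentially the same as the paper's: both rest on the observation that the subsidy scales the recourse cost by $(1-\alpha)$ while leaving the manipulation cost and both optimal targets unchanged, so each agent's indicator is nondecreasing in $\alpha$ and the average inherits this. The only cosmetic difference is that the paper rearranges the inequality into the form $\alpha \geq 1 - \min\bigl(1, \min_{\z''} c_M(\x,\z'')\bigr)/\min_{\z'} c_R(\x,\z')$ before concluding, whereas you compare the indicator directly at two subsidy levels, which is equivalent (and avoids an implicit division by $\min_{\z'} c_R$).
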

% \begin{proof}[Proof Sketch]
%     For the nominator of the recourse rate with subsidy $\alpha$, we can re-write its expression and get:
%     \begin{align*}
%         \underset{\x\in\XX_{-}}{\sum} \mathbbm{1}
%         \bigg[
%         \underset{\z'\in\ZZ}{\min}~c_R(\x, \z'; \alpha) < \min\big(1,~\underset{\z''\in\ZZ}{\min}~c_M(\x, \z'')\big) 
%         \bigg] 
%         = \underset{\x\in\XX_{-}}{\sum} \mathbbm{1}
%         \bigg[\alpha > \underbrace{1 - \frac{\min\big(1,~\underset{\z''\in\ZZ}{\min}~c_M(\x, \z'')\big)}{\underset{\z'\in\ZZ}{\min}~  c_R(\x, \z')}}_{\text{fixed for a particular x}} 
%         \bigg]
%     \end{align*}
%     As $\alpha$ becomes larger, more feature $\x\in \XX_{-}$ will satisfy the expression within the indicator function. This implies that the recourse rate is a monotonically increasing subsidy function for a given revealed set $\ZZ$. 
% \end{proof}

% \subsubsection{Subsidy with Social Cost} 
% \paragraph{Subsidy and Social Cost} 
With subsidy $\alpha$, the social cost for a given revealed set $\ZZ$ becomes:
\begin{align*}
      \resizebox{.95\linewidth}{!}{$\text{cost}(\ZZ, \XX_{-}; \alpha)
      = \sum_{\x\in \XX_{-}}\big(c_R(\x, \z_R(\x, \ZZ; \alpha);\alpha) - c_R(\x, \x_R; \alpha)\big)$}
    \end{align*}
where
$\z_R(\x, \ZZ; \alpha) = \argmin_{\z\in \ZZ} (1 - \alpha)c_R(\x, \z)$ is the optimal recourse action given revealed set $\ZZ$ and a particular subsidy level $\alpha$, and  $\x_R$ is the optimal default recourse action provided by the system without any strategic withholding. We can show that the social cost is also a monotonic non-increasing function in the subsidy level:

\begin{theorem}(Subsidy Influence on Social Cost)
    \label{thm:sub-social-cost}
    Given a revealed set $\ZZ$, 
    % when the recourse cost function $c_R(x, x')$ is monotonic in $\|x - x'\|$, 
    the social cost $\text{cost}(\ZZ, \XX_{-}; \alpha)$ is monotonically 
    % non-increasing 
    decreasing
    in subsidies.
\end{theorem}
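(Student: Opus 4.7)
The plan is to exploit a simple but crucial observation: since the subsidy only rescales the recourse cost by the positive constant $(1-\alpha)$, the $\argmin$ over $\ZZ$ is unchanged. Specifically, for any $\alpha \in [0,1)$,
\begin{align*}
\z_R(\x, \ZZ; \alpha)
&= \argmin_{\z\in \ZZ}~(1-\alpha)\,c_R(\x,\z) \\
&= \argmin_{\z\in \ZZ}~c_R(\x,\z),
\end{align*}
which is independent of $\alpha$. The same scaling argument applies to $\x_R$, the unsubsidized optimum over the feasible positive region: the identity of the minimizer does not depend on $\alpha$, only the value does.

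Given this invariance, the natural next step is to factor the scalar $(1-\alpha)$ out of the social cost expression. Writing $\z_R(\x,\ZZ)$ for the (subsidy-independent) argmin, I obtain
\begin{align*}
\text{cost}(\ZZ, \XX_{-}; \alpha) = (1-\alpha)\sum_{\x\in \XX_{-}}\!\Big(c_R(\x, \z_R(\x, \ZZ)) - c_R(\x, \x_R(\x))\Big).
\end{align*}
Now I would argue that each summand is non-negative: since $\x_R(\x)$ is defined in \cref{eq:opt_rec} as the minimizer of $c_R(\x, \cdot)$ over the full feasible set $A(\x)\cap \X_+$, while $\z_R(\x,\ZZ)$ is the minimizer of $c_R(\x,\cdot)$ over the restricted set $\ZZ$ (intersected with the feasible set), we have $c_R(\x, \z_R(\x,\ZZ)) \geq c_R(\x, \x_R(\x))$. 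Hence the sum is a non-negative quantity that does not depend on $\alpha$.

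The conclusion is then immediate: $\text{cost}(\ZZ, \XX_{-}; \alpha)$ equals a non-negative, $\alpha$-independent constant multiplied by $(1-\alpha)$, which is a strictly decreasing function of $\alpha$ on $[0,1]$. Therefore $\text{cost}(\ZZ, \XX_{-}; \alpha)$ is monotonically non-increasing in $\alpha$, and strictly decreasing whenever the sum is positive (i.e., whenever the system's restricted recourse pool $\ZZ$ is strictly worse than the unrestricted optimum for at least one agent). Honestly, there is no real obstacle here; the only subtlety worth flagging is the scale-invariance of $\argmin$, which collapses what might initially look like a delicate sensitivity analysis into a one-line calculation once the cost is factored.
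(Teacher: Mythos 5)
Your proof is correct, and it is actually cleaner and more general than the paper's. The paper proves this theorem only for the one-dimensional case with a linear threshold classifier $f(x)=\mathbbm{1}[x\ge\tau]$ and an explicit cost $c_R(x,x')=w_R\|x-x'\|$: it computes the social cost in closed form as $(1-\alpha)\cdot|\XX_-|\cdot w_R\cdot(\min_{z\in\ZZ}z-\tau)$ and reads off monotonicity, with non-negativity coming from the fact that every revealed $z$ is positively classified and hence at least $\tau$. You instead observe that the subsidy is a uniform positive rescaling, so the argmin defining $\z_R(\x,\ZZ;\alpha)$ is $\alpha$-invariant and the factor $(1-\alpha)$ pulls out of the entire sum; this works in any dimension and for any cost function, which is a genuine strengthening. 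The one point worth flagging is your non-negativity step: you compare the minimum of $c_R(\x,\cdot)$ over $\ZZ$ against $c_R(\x,\x_R(\x))$, where $\x_R(\x)$ minimizes over $A(\x)\cap\X_+$; this comparison yields $c_R(\x,\z_R(\x,\ZZ))\ge c_R(\x,\x_R(\x))$ only if the relevant elements of $\ZZ$ lie in (or are no better than) that feasible set, which the paper implicitly assumes and which holds automatically in its one-dimensional setting. If the per-agent differences could be negative, $(1-\alpha)$ times the sum could increase in $\alpha$, so this sign check is not merely cosmetic --- but under the paper's intended semantics (social cost as the \emph{additional} cost of withholding) it is the right assumption, and your argument goes through.
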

%The proof is deferred to the appendix.

% \paragraph{Subsidy with System's Utility}
%\subsubsection{Subsidy with System's Utility}
Subsidies also help improve the system's utility; under some assumptions on the cost functions (i.e., monotonic in the distance and only cross once), the system's utility is monotonic in subsidies as well:

\begin{theorem}(Subsidy's Influence on System's Utility)
\label{thm:sub-utility}
    Given a revealed set $\ZZ$, when both $c_R(\x, \x')$ and $c_M(\x, \x')$ are monotonic in $\|\x - \x'\|$ and only cross once, the system utility is monotonically increasing in subsidies.
\end{theorem}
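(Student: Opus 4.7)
The plan is to fix the revealed set $\ZZ$ and two subsidy levels $\alpha_1 < \alpha_2$, and show that for each agent in $\XX_{-}$ the best-response action in \cref{eqn:final-action} can only transition in ways that (weakly) enlarge the set of recourse-takers and (weakly) shrink the set of manipulators. Once this monotone-transition statement is in place, applying \cref{thm:expected-utility-change} immediately yields that the expected system utility is non-decreasing in $\alpha$.

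First I would observe that the targets in the agent's best response are stable in $\alpha$. Because the subsidy only scales the recourse cost by $(1-\alpha)$, the argmin $\arg\min_{\z\in\ZZ} c_R(\x, \z)$ is independent of $\alpha$; the manipulation target $\arg\min_{\z\in\ZZ} c_M(\x, \z)$ is trivially so. Under the monotone-in-distance hypothesis, both argmins coincide with the feature in $\ZZ$ nearest to $\x$; call it $\z^{\ast}$. Thus each agent's decision reduces to a scalar comparison among $(1-\alpha) c_R(\x, \z^{\ast})$, $c_M(\x, \z^{\ast})$, and the do-nothing threshold $1$.

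Next I would perform a case analysis on the transitions as $\alpha$ rises from $\alpha_1$ to $\alpha_2$. Since $(1-\alpha) c_R(\x, \z^{\ast})$ is non-increasing while $c_M(\x, \z^{\ast})$ is constant in $\alpha$, the transitions ``recourse $\to$ manipulate'', ``recourse $\to$ do-nothing'', and ``do-nothing $\to$ manipulate'' are all ruled out. The only feasible moves are (i) stay in place, (ii) do-nothing $\to$ recourse (once $(1-\alpha) c_R(\x,\z^{\ast})$ dips below $1$ and below $c_M(\x,\z^{\ast})$, the latter being automatic when $c_M(\x,\z^{\ast}) \geq 1$), and (iii) manipulate $\to$ recourse (once $(1-\alpha) c_R(\x,\z^{\ast})$ drops below $c_M(\x,\z^{\ast})$). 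The single-crossing assumption on $c_R$ and $c_M$ guarantees a clean threshold structure, so these transitions fire at well-defined subsidy levels and no oscillation occurs. Consequently the set of recourse-takers under $\alpha_2$ contains that under $\alpha_1$, and the set of manipulators under $\alpha_2$ is contained in that under $\alpha_1$. Summing the per-agent marginal contributions from \cref{thm:expected-utility-change} then delivers the monotonicity of system utility.

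The main obstacle is careful bookkeeping of the transitions for agents whose do-nothing action is strictly preferred at $\alpha_1$ but whose manipulation cost satisfies $c_M < 1$: one must verify that if such an agent begins manipulating at an intermediate subsidy, any subsequent move is to recourse rather than back to do-nothing, and this is precisely where the single-crossing assumption is used. Ties among equidistant features in $\ZZ$ can be resolved by any fixed tie-breaking rule without affecting the monotonicity conclusion, and the argument relies on \cref{thm:expected-utility-change} giving positive expected marginal utility for each new recourse action and negative expected marginal utility for each manipulation, so that removing a manipulator and adding a recourse-taker both push the total utility upward.
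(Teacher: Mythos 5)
Your proof is correct, and it takes a genuinely different route from the paper's. You argue per agent: fixing $\ZZ$, the recourse and manipulation targets are independent of $\alpha$, the effective recourse cost $(1-\alpha)\,c_R$ is non-increasing in $\alpha$ while $c_M$ and the do-nothing threshold $1$ are constant, so the only possible transitions as $\alpha$ grows are do-nothing $\to$ recourse and manipulate $\to$ recourse; summing the signed per-agent contributions from Theorem~\ref{thm:expected-utility-change} ($2\Pr[y=1\mid X=\x_R(\x)]-1\ge 0$ for recourse, $2\Pr[y=1\mid X=\x]-1\le 0$ for manipulation, $0$ for do-nothing) then gives monotonicity. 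The paper instead reduces to a one-dimensional threshold classifier, writes the utility as integrals of $2\Pr(y=1\mid X=x)-1$ over explicit recourse and manipulation regions $R_R^{(\alpha)}$ and $R_M^{(\alpha)}$, instantiates $c_R$ and $c_M$ with concrete linear-in-distance, single-crossing forms, and compares the derivatives of the two region sizes with respect to $\alpha$. Your argument is more elementary and more general: it needs neither the 1-D reduction nor the explicit cost parameterization, and in fact it does not need the single-crossing hypothesis at all --- since $c_M$ does not depend on $\alpha$, no agent can ever transition \emph{into} manipulation as the subsidy rises, so the oscillation you worry about in your final paragraph cannot occur (under Equation~\ref{eqn:final-action}, choosing do-nothing already implies $c_M(\x,\x_M(\x))\ge 1$, so such an agent can never later prefer manipulation). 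The single-crossing condition is what the paper's region-derivative computation requires; your route dispenses with it. The one assumption you do inherit is the ``better than random guessing'' condition of Theorem~\ref{thm:expected-utility-change}, which is likewise implicit in the paper's use of a monotone qualification rate $\Pr[y=1\mid X=x]$.
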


Next we examine the difference in social cost between groups as a function of subsidies.
We find that subsidies are an effective tool to mitigate disparities caused by strategically withheld recourse.
\begin{theorem}(Subsidy Influence on Social Cost Disparity)
    \label{thm:sub-social-cost-diff}
    With subsidy $\alpha$, the disparity in social cost for two group $g_0, g_1$ becomes: $\resizebox{.98\linewidth}{!}{$\textit{Diff}^{(\text{cost})}(\ZZ, \XX^{(g_0)}, \XX^{(g_1)};\alpha)
         := \left|\text{cost}(\ZZ, \XX_{-}^{(g_1)};\alpha) - \text{cost}(\ZZ, \XX_{-}^{(g_0)};\alpha)\right|$}$
Given a revealed set $\ZZ$, 
% when the recourse cost function $c_R(x, x')$ is monotonic in $\|x - x'\|$, 
the social cost difference monotonically decreases in subsidies. 
\end{theorem}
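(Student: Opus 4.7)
The plan is to reduce the theorem to a single clean observation: under subsidy $\alpha$, the recourse cost is globally rescaled by the positive factor $(1-\alpha)$, and rescaling by a positive constant preserves argmins. Concretely, I would first note that
\[
\z_R(\x, \ZZ; \alpha) \;=\; \argmin_{\z\in\ZZ}(1-\alpha)\,c_R(\x,\z) \;=\; \argmin_{\z\in\ZZ}c_R(\x,\z) \;=\; \z_R(\x, \ZZ)
\]
for any $\alpha\in[0,1)$, so the \emph{identity} of the best available recourse point in $\ZZ$ is independent of $\alpha$. The same holds trivially for the full-information optimal action $\x_R(\x)$, which is computed before subsidies are applied.

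Given this, I would substitute into the definition of social cost and factor $(1-\alpha)$ out of every summand:
\[
\text{cost}(\ZZ, \XX_{-}; \alpha) \;=\; \sum_{\x\in\XX_{-}}\bigl((1-\alpha)c_R(\x,\z_R(\x,\ZZ)) - (1-\alpha)c_R(\x,\x_R(\x))\bigr) \;=\; (1-\alpha)\,\text{cost}(\ZZ,\XX_{-}).
\]
Applying this identity separately to the two subgroups $\XX_{-}^{(g_0)}$ and $\XX_{-}^{(g_1)}$ and pulling the nonnegative scalar $(1-\alpha)$ out of the absolute value gives
\[
\textit{Diff}^{(\text{cost})}(\ZZ, \XX^{(g_0)}, \XX^{(g_1)};\alpha) \;=\; (1-\alpha)\;\textit{Diff}^{(\text{cost})}(\ZZ, \XX^{(g_0)}, \XX^{(g_1)};0).
\]
Since $\alpha \mapsto 1-\alpha$ is monotonically decreasing on $[0,1]$ and the baseline disparity at $\alpha=0$ is a nonnegative constant, the disparity is monotonically nonincreasing in $\alpha$, which is the claimed conclusion.

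The one subtlety worth flagging, and what I expect to be the main obstacle, is the interpretation of the sum defining $\text{cost}$: the formula as stated sums over \emph{all} of $\XX_{-}$, including agents who might opt to manipulate or do nothing rather than take recourse. Under this literal reading, the factorization above is immediate and the proof is essentially one line. If instead one wants to restrict the sum to agents actually performing recourse, then the set of contributing agents grows with $\alpha$ (by Theorem~\ref{thm:sub-rec-rate}); I would handle that case by noting that every added summand is nonnegative (by optimality of $\x_R(\x)$) and is scaled by $(1-\alpha)$, and then arguing that the \emph{net} effect on each group's cost remains a nonincreasing function of $\alpha$, so the difference of two nonincreasing functions with matching sign pattern yields a nonincreasing absolute difference after invoking the fact that both sides stay nonnegative. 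Since the theorem statement commits to the sum over $\XX_{-}$ as written, I would present the one-line factorization proof and simply remark on the alternative interpretation.
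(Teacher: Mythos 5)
Your proof is correct, and its core mechanism is the same one the paper relies on: under subsidy $\alpha$ the social cost factors as $(1-\alpha)$ times an $\alpha$-independent quantity, so the absolute difference between the two groups inherits the factor $(1-\alpha)$ and decreases monotonically. The difference is in how you reach that factorization. The paper works in the univariate linear-threshold setting it adopts for Section 6 (with $f(x)=\mathbbm{1}[x\geq\tau]$ and $c_R(x,x')=w_R\|x-x'\|$), computes the closed form $\text{cost}(\ZZ,\XX_{-};\alpha)=(1-\alpha)\,|\XX_{-}|\,w_R\,(\min_{\z\in\ZZ}\z-\tau)$ for each group, and reads off the monotonicity from $\textit{Diff}^{(\text{cost})}=\bigl|(1-\alpha)(|\XX_{-}^{(g_0)}|-|\XX_{-}^{(g_1)}|)(\min_{\z\in\ZZ}\z-\tau)\bigr|$. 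You instead observe that positive rescaling preserves the argmin, so $\z_R(\x,\ZZ;\alpha)=\z_R(\x,\ZZ)$ for $\alpha<1$, and factor $(1-\alpha)$ out of the sum directly; this yields the same conclusion without restricting to one dimension or to a norm-based cost, so your argument is strictly more general (the $\alpha=1$ endpoint is harmless since both costs vanish there). Your closing remark about the sum ranging over all of $\XX_{-}$ rather than only the agents who actually choose recourse is a fair reading of the definition, and the paper's own proof makes the same literal interpretation, so no gap arises on that front.
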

%The proof is deferred to the appendix.
% \begin{proof}
%     \textcolor{red}{(AE: either add a proof sketch or say that the proof is deferred)}
% \end{proof}

Intuitively, as we increase the subsidy level, the cost of recourse decreases linearly as a function of the subsidy level, making it increasingly cheaper to perform the optimal recourse action. 
For both social groups, their social cost approaches 0 as we increase the subsidy level; as a result, the disparity in social cost between the two groups also decreases to 0.

% \paragraph{Subsidy and Recourse Rate Difference}
With subsidy $\alpha$, for a given a revealed set $\ZZ$, the disparity in recourse ratio for groups $g_0, g_1$ is: 
\[
\resizebox{.98\linewidth}{!}{$ \textit{Diff}^{(\rec)}(\ZZ, \XX^{(g_0)}, \XX^{(g_1)}; \alpha):= \left|\rec(\ZZ, \XX_{-}^{(g_1)}; \alpha) - \rec(\ZZ, \XX_{-}^{(g_0)}; \alpha)\right|$}
\]
where $\rec(\ZZ, \XX_{-}^{(g_i)})$ is the recourse rate for a particular subgroup $g_i$.  We show that when subsidies are sufficiently large, the recourse rate difference is monotonically decreasing in subsidies: 

\begin{theorem}(Subsidy's Influence on Recourse Rate Disparity)
\label{thm:sub-recourse-rate-diff}
   Given two groups $g_0$ and $g_1$ of relatively equal negatively classified agents size $|\XX_{-}^{(g_0)}|\approx |\XX_{-}^{(g_1)}|$, there exists a subsidy level $0 \leq \alpha^*\leq 1$, such that $\forall \alpha \geq \alpha^*$, the recourse rate difference monotonically decreases.
\end{theorem}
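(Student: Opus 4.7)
The plan is to leverage the monotonicity of the per-group recourse rate $r_i(\alpha) := \rec(\ZZ, \XX_{-}^{(g_i)};\alpha)$ in $\alpha$ established by Theorem~\ref{thm:sub-rec-rate}, together with the observation that as $\alpha \to 1$ the subsidised recourse cost $(1-\alpha)c_R$ vanishes and (under mild feasibility assumptions) both rates saturate at the common limit $1$ before $\alpha = 1$. Once the faster-saturating group is pinned at $1$, the absolute-value difference collapses to $1 - r_{\text{other}}(\alpha)$, which inherits non-increasing monotonicity from Theorem~\ref{thm:sub-rec-rate}.

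First I would fix $\ZZ$ and derive, for each $\x \in \XX_{-}$, a \emph{critical subsidy} $\alpha_\x \in [0,1)$ such that $\x$ takes recourse iff $\alpha > \alpha_\x$. Reading off the switching condition in \Cref{eqn:final-action}, agent $\x$ prefers recourse precisely when $(1-\alpha)\,\min_{\z'\in\ZZ} c_R(\x,\z') < \min\bigl(1,\,\min_{\z''\in\ZZ} c_M(\x,\z'')\bigr)$, which rearranges to
\[
\alpha_\x \;=\; 1 \;-\; \frac{\min\bigl(1,\,\min_{\z''\in\ZZ} c_M(\x,\z'')\bigr)}{\min_{\z'\in\ZZ} c_R(\x,\z')},
\]
clamped to $[0,1)$. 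Hence $r_i(\alpha)$ is exactly the empirical CDF of $\{\alpha_\x : \x \in \XX_{-}^{(g_i)}\}$: a non-decreasing, right-continuous step function attaining its maximum at $\alpha_i^\dagger := \max_{\x \in \XX_{-}^{(g_i)}} \alpha_\x < 1$. Under the standard assumption that manipulation is strictly costly on $\ZZ$ (no agent can costlessly imitate a revealed positive feature, which holds because $\x \in \XX_{-}$ while $\ZZ \subseteq \X_+$), both groups share the common saturation value $r_0(\alpha_0^\dagger) = r_1(\alpha_1^\dagger) = 1$.

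Now set $\alpha^* := \min(\alpha_0^\dagger,\,\alpha_1^\dagger)$ and, without loss of generality, assume $\alpha_1^\dagger \leq \alpha_0^\dagger$, so that $r_1(\alpha) \equiv 1$ for every $\alpha \geq \alpha^*$. Then for all such $\alpha$ we have $r_0(\alpha) \leq 1 = r_1(\alpha)$, and therefore
\[
\textit{Diff}^{(\rec)}(\ZZ, \XX^{(g_0)}, \XX^{(g_1)};\alpha) \;=\; 1 - r_0(\alpha),
\]
which is non-increasing in $\alpha$ by Theorem~\ref{thm:sub-rec-rate}. The symmetric case $\alpha_0^\dagger < \alpha_1^\dagger$ is identical.

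The main obstacle is ruling out curve-crossing of $r_0$ and $r_1$ \emph{after} $\alpha^*$, which would flip the sign inside the absolute value and could destroy monotonicity. This is exactly where the ``relatively equal negatively classified agents size'' hypothesis does work: with comparable sample sizes the step heights of the two empirical CDFs are of the same order, so the faster-saturating group's rate, once pinned at $1$, stays strictly above the other's until that one also reaches $1$, precluding further crossings. I would also need to verify that the $\min(1,\cdot)$ clamp in \Cref{eqn:final-action} (the do-nothing regime) does not re-introduce non-monotonicity past $\alpha^*$; this is immediate because $(1-\alpha)c_R$ is decreasing in $\alpha$ while the manipulation and do-nothing thresholds are $\alpha$-independent, so no agent who has already switched to recourse by subsidy $\alpha^*$ can switch back at a larger subsidy.
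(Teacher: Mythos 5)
Your proposal is correct, and its core decomposition is the same as the paper's: both rewrite each agent's recourse decision as a threshold condition $\alpha \geq 1 - \gamma(\x)$ (your $\alpha_\x$ is exactly the paper's $1-\gamma(\x)$), so that each group's recourse rate is the empirical CDF of these critical values. Where you diverge is in the finish, and your version is actually tighter. The paper argues somewhat loosely that one can find an $\alpha^*$ past which ``only one $\x$'' in one group remains to switch and asserts that further increases can only shrink the gap; you instead take $\alpha^*$ to be the saturation point of the faster group, after which the absolute difference collapses to $1 - r_{\text{other}}(\alpha)$ and monotonicity follows immediately from Theorem~\ref{thm:sub-rec-rate}. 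This buys a genuinely rigorous argument at the price of one explicit (and reasonable) assumption the paper leaves implicit: that $\min(1,\min_{\z''}c_M(\x,\z''))>0$ for every $\x$, so that every critical value lies strictly below $1$ and saturation occurs before $\alpha=1$. Two remarks. First, your invocation of the ``relatively equal group sizes'' hypothesis is spurious: once one group's rate is pinned at $1$, the other rate is trivially $\leq 1$, so no crossing past $\alpha^*$ is possible regardless of group sizes, and your argument never actually uses the hypothesis. The paper uses it only to drop the normalizing denominators $|\XX_{-}^{(g_i)}|$ and compare raw counts, which your normalized-CDF formulation makes unnecessary. Second, there is a harmless off-by-$\epsilon$ issue: with the strict inequality in Definition~\ref{def:recourse-ratio}, the faster group's rate equals $1$ only for $\alpha > \alpha_1^\dagger$ rather than at $\alpha_1^\dagger$ itself, so either adopt the paper's $\geq$ convention or take $\alpha^*$ marginally above $\min(\alpha_0^\dagger,\alpha_1^\dagger)$; neither change affects the conclusion.
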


This result follows that when recourse is free, i.e., subsidies are maximized, all agents can perform recourse, and the recourse rate difference is $0$. 
Thus, as subsidies increase, there must exist a point (namely $\alpha^*$) when both groups can take advantage of subsidies at proportional rates, thus decreasing the gap between the number of agents performing recourse in both groups. We also verify empirically that for recourse rate difference, there indeed exists a peak subsidy value $\alpha^*$ where the recourse rate difference increases before and then decreases afterward (see Figure \ref{fig:lr_rec_rate_diff_sub}).

\section{EMPIRICAL STUDIES}
\label{sec:experiments}
% setup
\paragraph{Setup}
We conduct experiments using three datasets: 1) \textbf{Law School}  \cite{wightman1998lsac} dataset, in which the objective is to predict whether a student will pass the bar exam on the first attempt, \textbf{Adult Income} \cite{dua2017uci} in which the objective is to predict whether an individual earns more than $50K$ annually, and \textbf{German Credit} \cite{yeh2009comparisons} in which the objective is to predict whether a given individual will \emph{not} default on their credit. In each dataset, agents have constant utility over approved features, i.e., the conventional recourse setting where $u_a(\x) = 1$ for all $\x$; the principal (system) has utility $u_p(\x) = 1$ when the agent is a true positive ($y = 1$, $f(x) = 1$) and $u_p(\x) = -1$ when the agent is a false positive ($y = -1$, $f(x) = 1$). 
Qualification is predicted via Logistic Regression (shown in this section) or Gradient Boosting Trees (shown in the Supplement \cref{sec:additional-experimental-results}). 
% Additionally, we present outcomes using a Gradient Boosting Decision Tree as the classifier across all datasets, where we observe comparable results. We defer the plots to \cref{sec:additional-experimental-results}.

Recourse and manipulation both carry an $\ell_2$ cost, namely
$c_R (\x, \z) = \|w_R \cdot (\x - \z)\|_2$, and $c_M (\x, \z) = \|w_M \cdot (\x - \z)\|_2$,where $w_R$ and $w_M$ are the weight vectors for the cost functions.
In our experiments, we report outcomes over $100$ runs using randomly initialized $w_R$ and $w_M$ and resampled subsets of positive and negative agents in the dataset in each run. 
% We use the local search method provided in \cite{orso2015submodular} to compute the set of recourse actions that the system provides to agents.
We set the probability that the agent discloses their feature publicly at $p=0.7$ for all experiments. When varying this value, we observe similar results.

\begin{figure*}[tbh!]
\centering
\includegraphics[width=0.8\textwidth]{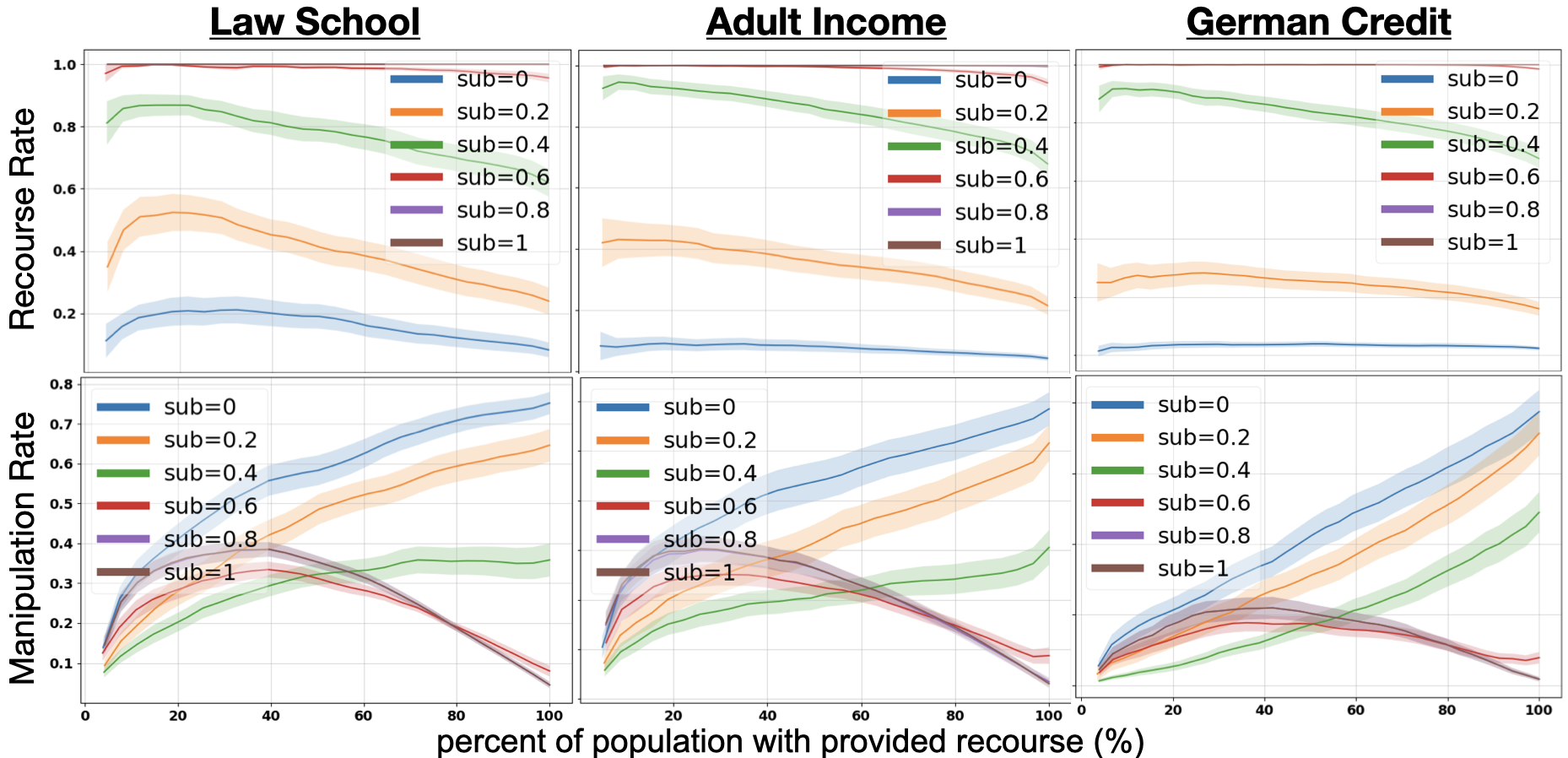}
        \caption{Fraction of the population performing recourse (top row) or manipulation (bottom row). Each line corresponds to a different subsidy ratio ``sub", i.e., the cost reduction applied to recourse. }
        \label{fig:lr_rec_man_ratio}   
\end{figure*}

\begin{figure*}[tbh!]
\centering
        \includegraphics[width = 0.8\textwidth]{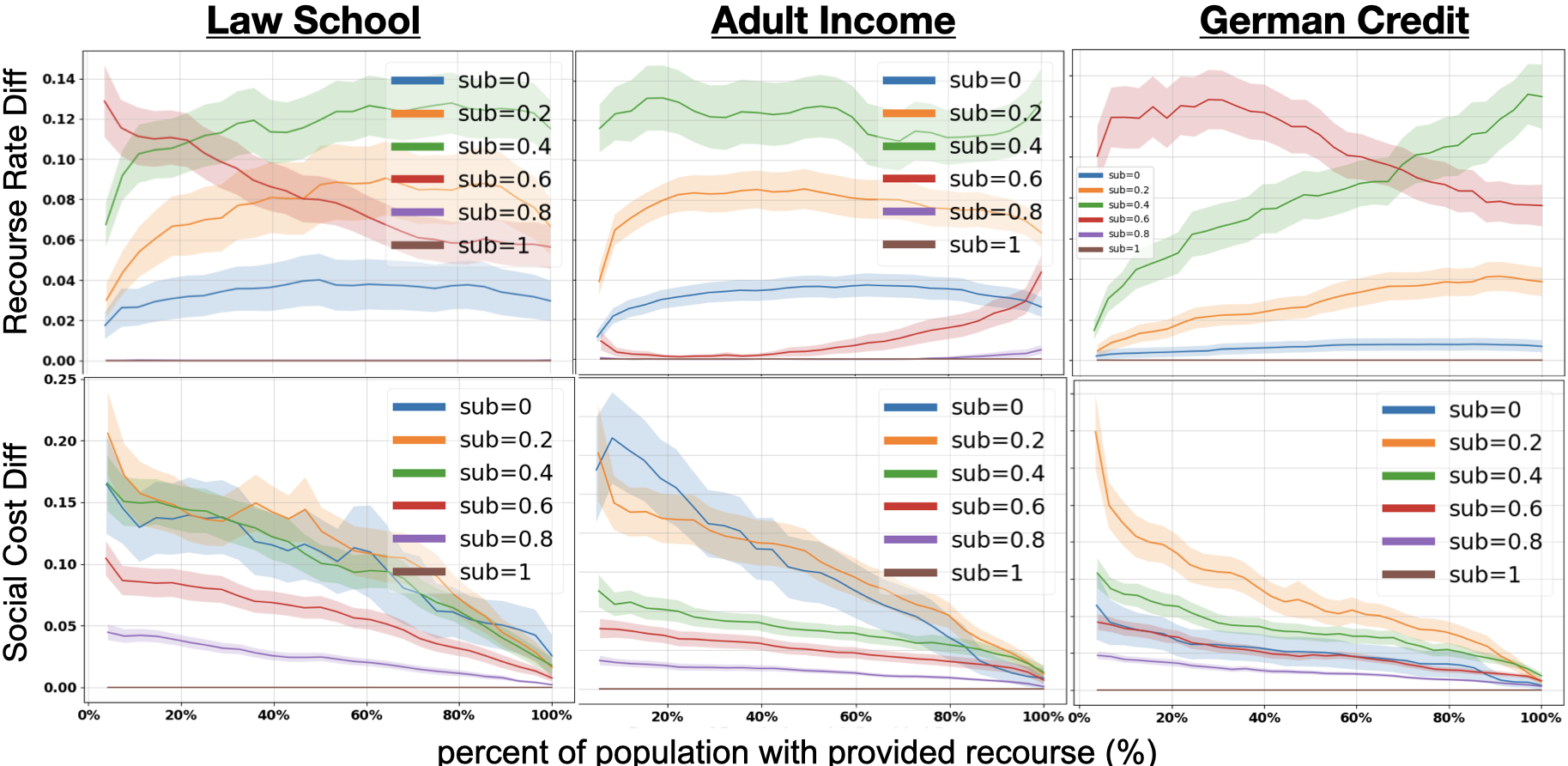}
        \caption{Difference in recourse rate (top row) and social cost (bottom row) between different sensitive attribute groups. Each line corresponds to a different subsidy ratio ``subs", i.e., the cost reduction applied to recourse. }
        \label{fig:lr-rec-social-cost-diff}
\end{figure*}

\paragraph{Recourse Rate and Manipulation Rate} We begin by examining the relationship between the fraction of the population choosing to perform recourse and the fraction choosing to perform manipulation as a function of the fraction of agents given a recourse action.
In Figure \ref{fig:lr_rec_man_ratio}, 
we see that in general, as the percentage of agents who are provided a recourse action increases, the recourse rate decreases while the manipulation rate increases (this trend holds for each subsidy value). 
Thus, when agents themselves can strategically select between recourse and manipulation, the increased model transparency, created by providing more agents with recourse actions, results in more agents selecting to perform manipulation.
Providing more recourse actions to agents, does not necessarily result in more agents performing recourse. Despite this general trend, we also observe the effectiveness of subsidies. 
As subsidies converge to $1$ (meaning recourse carries no cost), the fraction of agents choosing recourse converges to $1$, while the fraction of agents choosing manipulation converges to $0$. 
While it may be expensive in general to provide such subsidies, and the question of how to balance this expense against the system's own utility remains open, these results indicate that subsidies are an effective avenue for broadly promoting recourse and disincentivizing manipulation.

\paragraph{Disparity in Recourse and Social Cost} 
Lastly, we investigate how strategic system behavior causes disparate impacts among sensitive groups. 
In our experiments, groups are taken to be binary and are defined by race in the Law School dataset (White and Non-White), by gender in the Adult Income dataset (Male and Female), and by age in the German Credit dataset (Young and Old).
In Figure \ref{fig:lr-rec-social-cost-diff}, we see the difference in the number of agents performing recourse and social cost between groups.
Higher values in these plots indicate higher rates of recourse, or lower cost, for White individuals in the Law School dataset, Male individuals in the Adult income dataset, and Young individuals in the Credit dataset. 
First, strong subsidies (particularly subs $\leq 0.4$) result in a large decrease in the disparities between groups for both recourse rate and social cost. 
For less strong subsidies (subs $\geq 0.6$), we see that the gap in recourse rate between groups can increase. 
This is due to the fact that when subsidies are less strong, only agents with already low costs of recourse (primarily from the advantaged group) can benefit from those subsidies. %We also provide results on recourse rate difference between groups as a function of different values of subsidies in \cref{sec:rec_ratio_diff_sub}, and observe the parabolic nature relationship between subsidies and recourse rate difference. Thus, when deciding the amount of subsidies to choose from, it is important for systems to be aware of the potential negative impacts (larger disparities between groups) that can result from smaller subsidies.

\section{CONCLUSION}
In scenarios where agents can manipulate a system, there is a reduced incentive for the system to provide recourse due to increased model transparency. Consequently, the system strategically withholds recourse from some, leading to higher social costs, and disproportionately impacting disadvantaged groups. Despite the inherent tension between the system's utility and its provision of recourse, subsidies emerge as a viable tool to boost recourse-providing rates and alleviate group-wise disparities resulting from recourse withholding.

\paragraph{Acknowledgements}
This work is partially supported by the National Science Foundation (NSF) under grants IIS-2214141, IIS-1905558, CNS-2310470, IIS-2143895, IIS-2040800, and CCF-2023495; the Office of Naval Research (ONR) under grant N00014-24-1-2663, and Amazon.

\bibliography{iclr2025_conference}

\begin{thebibliography}{32}
\providecommand{\natexlab}[1]{#1}
\providecommand{\url}[1]{\texttt{#1}}
\expandafter\ifx\csname urlstyle\endcsname\relax
  \providecommand{\doi}[1]{doi: #1}\else
  \providecommand{\doi}{doi: \begingroup \urlstyle{rm}\Url}\fi

\bibitem[Akyol et~al.(2016)Akyol, Langbort, and Basar]{akyol2016price}
Emrah Akyol, Cedric Langbort, and Tamer Basar.
\newblock Price of transparency in strategic machine learning.
\newblock \emph{arXiv preprint arXiv:1610.08210}, 2016.

\bibitem[Barsotti et~al.(2022)Barsotti, Ko{\c{c}}er, and Santos]{barsotti2022transparency}
Flavia Barsotti, R{\"u}ya~G{\"o}khan Ko{\c{c}}er, and Fernando~P Santos.
\newblock Transparency, detection and imitation in strategic classification.
\newblock In \emph{Proceedings of the 31st International Joint Conference on Artificial Intelligence, IJCAI 2022}. International Joint Conferences on Artificial Intelligence (IJCAI), 2022.

\bibitem[Bechavod et~al.(2022)Bechavod, Podimata, Wu, and Ziani]{bechavod2022information}
Yahav Bechavod, Chara Podimata, Steven Wu, and Juba Ziani.
\newblock Information discrepancy in strategic learning.
\newblock In \emph{International Conference on Machine Learning}, pp.\  1691--1715. PMLR, 2022.

\bibitem[Blocki et~al.(2013)Blocki, Christin, Datta, Procaccia, and Sinha]{blocki2013audit}
Jeremiah Blocki, Nicolas Christin, Anupam Datta, Ariel~D Procaccia, and Arunesh Sinha.
\newblock Audit games.
\newblock \emph{arXiv preprint arXiv:1303.0356}, 2013.

\bibitem[Chen et~al.(2020)Chen, Wang, and Liu]{chen2020linear}
Yatong Chen, Jialu Wang, and Yang Liu.
\newblock Linear classifiers that encourage constructive adaptation.
\newblock \emph{arXiv preprint arXiv:2011.00355}, 2020.

\bibitem[Chen et~al.(2023)Chen, Tang, Zhang, and Liu]{chen2023model}
Yatong Chen, Zeyu Tang, Kun Zhang, and Yang Liu.
\newblock Model transferability with responsive decision subjects.
\newblock In \emph{International Conference on Machine Learning}, pp.\  4921--4952. PMLR, 2023.

\bibitem[Chen et~al.(2018)Chen, Podimata, Procaccia, and Shah]{chen2018strategyproof}
Yiling Chen, Chara Podimata, Ariel~D Procaccia, and Nisarg Shah.
\newblock Strategyproof linear regression in high dimensions.
\newblock In \emph{Proceedings of the 2018 ACM Conference on Economics and Computation}, pp.\  9--26, 2018.

\bibitem[Dong et~al.(2018)Dong, Roth, Schutzman, Waggoner, and Wu]{dong2018strategic}
Jinshuo Dong, Aaron Roth, Zachary Schutzman, Bo~Waggoner, and Zhiwei~Steven Wu.
\newblock Strategic classification from revealed preferences.
\newblock In \emph{Proceedings of the 2018 ACM Conference on Economics and Computation}, pp.\  55--70, 2018.

\bibitem[Dua et~al.(2017)Dua, Graff, et~al.]{dua2017uci}
Dheeru Dua, Casey Graff, et~al.
\newblock Uci machine learning repository.
\newblock 2017.

\bibitem[Ehyaei et~al.(2023)Ehyaei, Karimi, Sch{\"o}lkopf, and Maghsudi]{ehyaei2023robustness}
Ahmad-Reza Ehyaei, Amir-Hossein Karimi, Bernhard Sch{\"o}lkopf, and Setareh Maghsudi.
\newblock Robustness implies fairness in causal algorithmic recourse.
\newblock In \emph{Proceedings of the 2023 ACM Conference on Fairness, Accountability, and Transparency}, pp.\  984--1001, 2023.

\bibitem[Estornell et~al.(2023{\natexlab{a}})Estornell, Chen, Das, Liu, and Vorobeychik]{estornell2023incentivizing}
Andrew Estornell, Yatong Chen, Sanmay Das, Yang Liu, and Yevgeniy Vorobeychik.
\newblock Incentivizing recourse through auditing in strategic classification.
\newblock In \emph{Proceedings of the Thirty-Second International Joint Conference on Artificial Intelligence}, pp.\  400--408, 08 2023{\natexlab{a}}.
\newblock \doi{10.24963/ijcai.2023/45}.

\bibitem[Estornell et~al.(2023{\natexlab{b}})Estornell, Das, Liu, and Vorobeychik]{estornell2021unfairness}
Andrew Estornell, Sanmay Das, Yang Liu, and Yevgeniy Vorobeychik.
\newblock Group-fair classification with strategic agents.
\newblock In \emph{ACM Conference on Fairness, Accountability, and Transparency}, pp.\  389--399, 2023{\natexlab{b}}.

\bibitem[Fokkema et~al.(2024)Fokkema, Garreau, and van Erven]{fokkema2024risks}
Hidde Fokkema, Damien Garreau, and Tim van Erven.
\newblock The risks of recourse in binary classification.
\newblock In \emph{International Conference on Artificial Intelligence and Statistics}, pp.\  550--558. PMLR, 2024.

\bibitem[Gupta et~al.(2019)Gupta, Nokhiz, Roy, and Venkatasubramanian]{gupta2019equalizing}
Vivek Gupta, Pegah Nokhiz, Chitradeep~Dutta Roy, and Suresh Venkatasubramanian.
\newblock Equalizing recourse across groups.
\newblock \emph{arXiv preprint arXiv:1909.03166}, 2019.

\bibitem[Hardt et~al.(2016)Hardt, Megiddo, Papadimitriou, and Wootters]{hardt2016strategic}
Moritz Hardt, Nimrod Megiddo, Christos Papadimitriou, and Mary Wootters.
\newblock Strategic classification.
\newblock In \emph{Proceedings of the 2016 ACM conference on innovations in theoretical computer science}, pp.\  111--122, 2016.

\bibitem[Harris et~al.(2022)Harris, Chen, Kim, Talwalkar, Heidari, and Wu]{harris2022bayesian}
Keegan Harris, Valerie Chen, Joon Kim, Ameet Talwalkar, Hoda Heidari, and Steven~Z Wu.
\newblock Bayesian persuasion for algorithmic recourse.
\newblock \emph{Advances in Neural Information Processing Systems}, 35:\penalty0 11131--11144, 2022.

\bibitem[Hu et~al.(2019)Hu, Immorlica, and Vaughan]{hu2019disparate}
Lily Hu, Nicole Immorlica, and Jennifer~Wortman Vaughan.
\newblock The disparate effects of strategic manipulation.
\newblock In \emph{Proceedings of the Conference on Fairness, Accountability, and Transparency}, pp.\  259--268, 2019.

\bibitem[Karimi et~al.(2020{\natexlab{a}})Karimi, Barthe, Schölkopf, and Valera]{karimi2020survey}
Amir-Hossein Karimi, Gilles Barthe, Bernhard Schölkopf, and Isabel Valera.
\newblock A survey of algorithmic recourse: definitions, formulations, solutions, and prospects, 2020{\natexlab{a}}.

\bibitem[Karimi et~al.(2020{\natexlab{b}})Karimi, von Kügelgen, Schölkopf, and Valera]{karimi2020algorithmic}
Amir-Hossein Karimi, Julius von Kügelgen, Bernhard Schölkopf, and Isabel Valera.
\newblock Algorithmic recourse under imperfect causal knowledge: a probabilistic approach, 2020{\natexlab{b}}.

\bibitem[Levanon \& Rosenfeld(2021)Levanon and Rosenfeld]{levanon2021strategic}
Sagi Levanon and Nir Rosenfeld.
\newblock Strategic classification made practical.
\newblock In \emph{International Conference on Machine Learning}, pp.\  6243--6253. PMLR, 2021.

\bibitem[Milli et~al.(2019)Milli, Miller, Dragan, and Hardt]{milli2019social}
Smitha Milli, John Miller, Anca~D Dragan, and Moritz Hardt.
\newblock The social cost of strategic classification.
\newblock In \emph{Proceedings of the Conference on Fairness, Accountability, and Transparency}, pp.\  230--239, 2019.

\bibitem[Olckers \& Walsh(2023)Olckers and Walsh]{olckers2023incentives}
Matthew Olckers and Toby Walsh.
\newblock Incentives to offer algorithmic recourse, 2023.

\bibitem[Orso et~al.(2015)Orso, Lee, and Shen]{orso2015submodular}
Andrew Orso, Jon Lee, and Siqian Shen.
\newblock Submodular minimization in the context of modern lp and milp methods and solvers.
\newblock In \emph{Proceedings of the 14th International Symposium on Experimental Algorithms - Volume 9125}, pp.\  193–204, Berlin, Heidelberg, 2015. Springer-Verlag.
\newblock ISBN 9783319200859.

\bibitem[{Tsirtsis} et~al.(2019){Tsirtsis}, {Tabibian}, {Khajehnejad}, {Singla}, {Sch{\"o}lkopf}, and {Gomez-Rodriguez}]{tsirtsis2019optimal}
Stratis {Tsirtsis}, Behzad {Tabibian}, Moein {Khajehnejad}, Adish {Singla}, Bernhard {Sch{\"o}lkopf}, and Manuel {Gomez-Rodriguez}.
\newblock {Optimal Decision Making Under Strategic Behavior}.
\newblock \emph{arXiv e-prints}, 2019.

\bibitem[Ustun et~al.(2019)Ustun, Spangher, and Liu]{ustun2019actionable}
Berk Ustun, Alexander Spangher, and Yang Liu.
\newblock Actionable recourse in linear classification.
\newblock In \emph{Proceedings of the conference on fairness, accountability, and transparency}, pp.\  10--19, 2019.

\bibitem[Venkatasubramanian \& Alfano(2020)Venkatasubramanian and Alfano]{venkatasubramanian2020philosophical}
Suresh Venkatasubramanian and Mark Alfano.
\newblock The philosophical basis of algorithmic recourse.
\newblock In \emph{Proceedings of the 2020 Conference on Fairness, Accountability, and Transparency}, pp.\  284--293, 2020.

\bibitem[von Kügelgen et~al.(2020)von Kügelgen, Bhatt, Karimi, Valera, Weller, and Schölkopf]{vonkugelgen2020fairness}
Julius von Kügelgen, Umang Bhatt, Amir-Hossein Karimi, Isabel Valera, Adrian Weller, and Bernhard Schölkopf.
\newblock On the fairness of causal algorithmic recourse, 2020.

\bibitem[von Kügelgen et~al.(2022)von Kügelgen, Karimi, Bhatt, Valera, Weller, and Schölkopf]{vonkügelgen2022fairness}
Julius von Kügelgen, Amir-Hossein Karimi, Umang Bhatt, Isabel Valera, Adrian Weller, and Bernhard Schölkopf.
\newblock On the fairness of causal algorithmic recourse, 2022.

\bibitem[Vorobeychik(2023)]{vorobeychik2023many}
Yevgeniy Vorobeychik.
\newblock The many faces of adversarial machine learning.
\newblock In \emph{AAAI Conference on Artificial Intelligence}, volume~37, pp.\  15402--15409, 2023.

\bibitem[Wightman \& Council(1998)Wightman and Council]{wightman1998lsac}
L.F. Wightman and Law School~Admission Council.
\newblock \emph{LSAC National Longitudinal Bar Passage Study}.
\newblock LSAC research report series. Law School Admission Council, 1998.
\newblock URL \url{https://books.google.com/books?id=O9A7AQAAIAAJ}.

\bibitem[Yeh \& Lien(2009)Yeh and Lien]{yeh2009comparisons}
I-Cheng Yeh and Che-hui Lien.
\newblock The comparisons of data mining techniques for the predictive accuracy of probability of default of credit card clients.
\newblock \emph{Expert systems with applications}, 36\penalty0 (2):\penalty0 2473--2480, 2009.

\bibitem[Zrnic et~al.(2021)Zrnic, Mazumdar, Sastry, and Jordan]{zrnic2021leads}
Tijana Zrnic, Eric Mazumdar, Shankar Sastry, and Michael Jordan.
\newblock Who leads and who follows in strategic classification?
\newblock \emph{Advances in Neural Information Processing Systems}, 34:\penalty0 15257--15269, 2021.

\end{thebibliography}
\bibliographystyle{iclr2025_conference}

% References follow the acknowledgements.  Use an unnumbered third level
% heading for the references section.  Please use the same font
% size for references as for the body of the paper---remember that
% references do not count against your page length total.

% \begin{thebibliography}{}
% \setlength{\itemindent}{-\leftmargin}
% \makeatletter\renewcommand{\@biblabel}[1]{}\makeatother
% \bibitem{} J.~Alspector, B.~Gupta, and R.~B.~Allen (1989).
%     \newblock Performance of a stochastic learning microchip.
%     \newblock In D. S. Touretzky (ed.),
%     \textit{Advances in Neural Information Processing Systems 1}, 748--760.
%     San Mateo, Calif.: Morgan Kaufmann.

% \bibitem{} F.~Rosenblatt (1962).
%     \newblock \textit{Principles of Neurodynamics.}
%     \newblock Washington, D.C.: Spartan Books.

% \bibitem{} G.~Tesauro (1989).
%     \newblock Neurogammon wins computer Olympiad.
%     \newblock \textit{Neural Computation} \textbf{1}(3):321--323.
% \end{thebibliography}

%%%%%%%%%%%%%%%%%%%%%%%%%%%%%%%%%%%%%%%%%%%%%%%%%%%%%%%%%%%%
\newpage

\appendix
\newpage
\onecolumn
\aistatstitle{To Give or Not to Give? The Impacts of Strategically Withheld Recourse}

%%%%%%%%%%%%%%%% NOTATION TABLE %%%%%%%%%%%%%%%% 
\section{Notation Table}
\label{secA:notation}
\begin{table}[!ht]
    \centering
    \begin{tabular}{c l}
        \toprule
            Symbol & Usage \\
        \midrule
            $\X\subset \mathbb{R}^d$  &  The domain of the feature $\x$  \\
            $\Y\equiv\{0, 1\}$  & The domain of labels \\ 
            $\XX \in \mathbb{R}^{n\times d} $ & A set of features of $n$ agents \\
            $\YY\in \{0, 1\}^{|\XX|}$ & The labels for the set of features $\XX$ \\
            $\x\in \X$ & A random variable representing an example's features\\
            $y\in \Y$ & A random variable representing an example's \emph{ground truth label} \\
            $f:\X\rightarrow \Y$ & a binary classifier, unknown to the agents \\
            $\X_{-}, \X^{(0)}\subseteq \X$ &  The domain of negatively classified features,  i.e. $\forall \x\in\X_{-}$, $f(\x) = 0$ \\
            $\X_{+}\subseteq \X$ &  The domain of positively classified features, i.e., $\forall \x \in \X_{+}, f(\x) = 1$  \\
            $\XX_{-}\subseteq \X$ &  The set of negatively classified features,  i.e. $\forall \x\in\XX_{-}$, $f(\x) = 0$ \\
            $\XX_{+}\subseteq \X$ &  The set of positively classified features, i.e., $\forall \x \in \XX_{+}, f(\x) = 1$  \\
            $\XX^{(g_i)} \subseteq \XX$ & The subset of features belongs to group $G = g_i$  \\
            $c_R: \X \times \X \rightarrow \R_{+}$ & The cost function of recourse \\
            $c_M: \X \times \X \rightarrow \R_{+}$ & The cost function of recourse \\
            $\XX_R$ & The set of all possible recourse actions \\  
            $\ZZ_R$ & The set of revealed recourse actions \\ 
            $\ZZ_+$ & The set of revealed positively classified features \\ 
            $\ZZ = \ZZ_R \cup \ZZ_+$ & A publicly revealed feature set \\
            % $R_R$ &  The set of agents taking recourse \yc{redefine notation}\\
            % $R_M$ &  The set of agents taking manipulation \yc{redefine notation}\\
            $\x_R(\x)$ & The optimal recourse action for agent with feature $\x$\\
            $\x_M(\x)$ & The optimal manipulation action for agent with feature $\x$\\
            $\z(\x, \ZZ)$ & The agent's final action\\
            $\rec(\ZZ,\XX)$ & The recourse ratio for feature sets $\XX$ given revealed set is $\ZZ$\\
            $\alpha\in [0, 1]$ & A subsidy level\\
            $u_0 \in \mathbb{R}$ & The initial utility of a system without providing recourse\\  
        \bottomrule
    \end{tabular}
    \caption{Primary Notation}
    \label{tab:notation_table}
  \end{table}

\newpage

%%%%%%%%%%%%%%
\section{Additional Related Work}
\label{sec:additional-related-work}
\paragraph{Recourse} Recourse focuses on providing agents with the ability to contest or improve their outcome via a modification to their attributes in a \emph{genuine} manner (e.g., paying off debt to increase creditworthiness) \cite{ustun2019actionable, venkatasubramanian2020philosophical,karimi2020survey,gupta2019equalizing,karimi2020algorithmic, vonkugelgen2020fairness,chen2020linear, harris2022bayesian}. Much of this line of work focuses on the setting where the requested recourse is guaranteed to be provided. 
As far as we know, our work is the first to challenge this fundamental assumption and argue that without a third-party's intervention (e.g., the government regulation on the system's recourse providing), a utility-maximizing algorithmic recourse system may be incentivized to strategically withhold recourse from some agents to prevent manipulations. We point the reader to \cite{karimi2020survey} for a more detailed discussion of the concepts and recent development of algorithmic recourse. To our knowledge, even though the literature has previously introduced the concepts of recourse, strategic manipulation, and subsidy analysis, we have yet to find any studies that explicitly highlight how a recourse system might act strategically by withholding recourse to enhance its own utility. The originality of our work, thus, is to address this gap. 
Some works have investigated the relationship between incentives/utility and recourse, such as \cite{fokkema2024risks}, which finds that providing recourse can decrease classifier accuracy, \cite{estornell2023incentivizing}, which investigates ways to ensure that given recourse actions are taken by agents, and \cite{olckers2023incentives} which investigates the incentive compatibility of recourse.

\paragraph{Strategic Classification} Strategic Classification focuses on the problem of how to effectively make predictions in the presence of agents who behave strategically to obtain desirable outcomes \cite{hardt2016strategic,chen2018strategyproof,tsirtsis2019optimal,levanon2021strategic,dong2018strategic,chen2018strategyproof, zrnic2021leads,chen2023model}. 
Our work considers a specific type of strategic behavior, namely the \emph{imitation-based} manipulations: 
agents do not know the classifier $f$ but are aware of a set of positively classified features and can misreport their feature by imitating another agent's feature that is positively classified. Such copycat behavior has been well-known in the literature of game theory, the behavioral economy, and strategic classification, e.g., \cite{bechavod2022information, barsotti2022transparency}. 
While most of this line of work focuses on agents being strategic and could potentially modify their features to get a favorable prediction outcome, our work focuses on when the system is being strategic and potentially withholds recourse to the agents. 

\paragraph{Fairness and Social Cost in Recourse and Strategic Classification} Fairness has been explored in the literature algorithmic recourse and strategic classification. 
For example, existing works on fairness in recourse emphasize the importance of equitable recourse and explore various remedying unfair recourse decisions \cite{gupta2019equalizing, vonkügelgen2022fairness, ehyaei2023robustness}. 
Fairness with the presence of strategic behavior has featured studies that highlight the inequity that results from strategic behavior by individuals \cite{hu2019disparate}, as well as inequity (e.g., social cost) resulting from making classifiers robust to strategic behavior \cite{milli2019social, estornell2021unfairness}. 
Unlike previous work that primarily focuses on proposing fair classifiers with the presence of strategic agents, our work uniquely demonstrates how the system's strategic withholding impacts the fairness and social cost for different societal groups. 

\paragraph{Transparency} Also related is work on transparency in machine learning. 
In particular, 
\cite{barsotti2022transparency} find that the risks of transparent explanations are alleviated if effective methods to detect faking behaviors are in place. Unlike our modeling framework, they model transparency as how much noise is in the threshold of a threshold classifier. 
\cite{akyol2016price} examines the impact of users' strategic behavior on the design and performance of transparent machine learning algorithms, quantifying the "price of transparency" as the cost ratio for the algorithm designer when users exploit transparency compared to when the algorithm is opaque.

\paragraph{Comparison with three closely related papers}
\begin{itemize}
    \item Comparison with \cite{estornell2023incentivizing}: the key distinction between these our work and \cite{estornell2023incentivizing} is that \cite{estornell2023incentivizing} presumes the system will provide any agent with an optimal recourse action and examines how auditing can dissuade agents from manipulating. In contrast, we do not consider auditing, and instead focus on how a system may be incentivized to withhold recourse from certain agents. While both papers examine the use of subsidies, our focus and model are distinct.
    \item Comparison with \cite{fokkema2024risks}: \cite{fokkema2024risks} focuses on discussing the accuracy drop as a result of the system providing recourse, because it pushes users to regions of higher class uncertainty and therefore leads to more mistakes. Our work, on the other hand, focuses on the incentive-compatibility problem in an algorithmic recourse system. 
of the population
    \item Comparison with \cite{olckers2023incentives}: similar to our work, \cite{olckers2023incentives} also studies when it is incentive-compatible for a decision-maker to offer recourse. Unlike our setting, however, they primarily operate on a simple toy model that assumes the applicant’s \emph{profitability} is fixed.
\end{itemize}

%%%%%%%%%%%%%%%%%%%%%%% PROOFS FOR SECTION 3 %%%%%%%%%%%%%%%%%%%%%%
\section{Proofs for Section \ref{sec:system}}

\subsection{ILP for system when p = 1}
\label{subsec:ILP}
We provide the ILP formula for the system to find optimal recourse actions when the revealing probability $p =1$:
\begin{align}
    \max_{\mathbf{a}\in\{0, 1\}^{|\ZZ_{\max}|}, \mathbf{b}\in\{0, 1\}^{|\XX_{-}|}}&~\sum_{j=1}^{|\XX_{-}|} b_j && \tag{maximize the number of agents performing recourse}\\
    \text{s.t. }&~b_j c_R(\x_j, \z_R) \leq a_i c_M(\x_j, \z_i) + (1-a_i) && \tag{only do recourse if all manipulation costs are greater}\\
    &~b_j \leq a_{j_R} && \tag{the optimal recourse action $\z_{j_R}$ for agent $j$ must be revealed}\\
    &~b_j c_R(x_j, z_R) \leq 1 && \tag{the optimal recourse action $\z_{j_R}$ for agent $j$ must be less than 1}\\
    & \sum_{h = 1}^{|\ZZ|} a_{h} = k && \tag{the total number of revealed recourse action is $k$}
\end{align}

\subsection{NP hardness of the System's Optimal Recourse Providing Problem}
\label{sec:np-hard}
%Recall the statement of \cref{thm:nonseq_hard}:

\begin{theorem}
\label{thm:nonseq_hard}
    % Let $\XX^{(0)}$ be the set of agents requesting recourse actions and let $\ZZ$ be the set of disclosed recourse actions.
    % Suppose the principal knows both the cost of recourse $c_R(\x, \z)$ and cost of manipulation $c_M(\x, \z)$ for each tuple $(\x, \z)\in \XX^{(0)}\times\ZZ$. 
    % The objective of providing optimal recourse actions to $k$ individuals simultaneously while minimizing the number of agents electing to perform manipulation is NP-hard.
    The problem of selecting the optimal set of recourse actions to recommend, such that the system's utility is maximized (Equation \ref{eq:sys_obj}), is NP-hard, even when the probability of diclosure $p = 1$.
\end{theorem}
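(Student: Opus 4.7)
The plan is to prove NP-hardness via a polynomial-time reduction from Maximum Coverage (or, equivalently, Set Cover in its decision form). The submodular but non-monotone structure of the system's objective $\TP(\SS)-\FP(\SS)$ (see Theorem 2 of the excerpt) makes this natural: each candidate recourse feature carries both a positive TP contribution from agents who adopt it legitimately and a negative FP contribution from agents who manipulate toward it, which mirrors the trade-off structure of coverage-style problems.

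First, I would construct a gadget in which each set $S_i$ of a Maximum Coverage instance corresponds to a candidate recourse feature $\z_i \in \XX_R$ with $f(\z_i)=1$ and $\Pr[y=1 \mid \z_i]=1$, and each element $u \in U$ corresponds to an agent with true feature $\x_u$ satisfying $f(\x_u)=0$ and $\Pr[y=1 \mid \x_u]=0$. The recourse cost would be tuned so that $c_R(\x_u,\z_i)<1$ iff $u \in S_i$ and $c_R(\x_u,\z_i)\geq 1$ otherwise. Initially setting $c_M$ uniformly above $1$ would force every feasible action to be either recourse or do-nothing, so by Theorem 3 (\emph{Expected Utility Changes}) the system's utility reduces exactly to the number of covered elements. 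Since $p=1$, the publicly revealed set is deterministically $\ZZ = \ZZ_R \cup \XX_+$, removing any randomness from the argument.

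Second, because Equation \ref{eq:sys_obj} places no hard budget on $|\ZZ_R|$, I would encode the coverage budget $k$ inside the cost structure using an auxiliary group of ``poison'' agents. Each poison agent would have a small, uniform manipulation cost $c_M<1$ toward every $\z_i$ but no affordable recourse, producing a calibrated FP penalty per revealed feature. Choosing the number of poison agents and the gap in $c_M$ so that revealing $k+1$ features always hurts the system strictly more than any additional coverage can help reduces the problem exactly to selecting $k$ sets maximizing coverage, after which a threshold test on the final utility decides the original instance.

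The main obstacle will be the joint calibration of $c_R$ and $c_M$ across the ``real'' and ``poison'' subpopulations: since every agent's best response (Equation \ref{eqn:final-action}) depends on the minimum among all recourse and manipulation costs, the poison gadget must not inadvertently flip a real agent from recourse to manipulation or to do-nothing. I would handle this by placing the poison agents in a coordinate subspace orthogonal to the one hosting $\{\x_u\}$ and $\{\z_i\}$, so their recourse and manipulation cost functions decouple from those of the real agents. Once this decoupling is verified, the system's utility equals exactly the coverage number minus a linear penalty in $|\ZZ_R|$, and an optimal reveal set solves Maximum Coverage, yielding NP-hardness even under $p=1$.
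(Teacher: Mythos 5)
Your high-level strategy (a reduction from a coverage-type problem, with agents as elements and candidate recourse features as sets) is the same one the paper uses --- the paper reduces from Minimum $k$-Union, which is just the min-counterpart of Maximum Coverage --- but your construction has a concrete gap in the budget-emulation gadget, which is the step your whole reduction rests on. As you note, without a cardinality budget the first stage of your construction is trivial (revealing every $\z_i$ maximizes coverage), so everything depends on the poison agents. But a poison agent with a \emph{uniform} manipulation cost $c_M<1$ toward \emph{every} $\z_i$ best-responds by manipulating as soon as $\ZZ_R$ is nonempty and contributes exactly one false positive regardless of how many features are revealed; the total poison penalty is therefore $(\text{\#poison agents})\cdot\indicator[\,|\ZZ_R|\ge 1\,]$, a flat constant, not ``a calibrated FP penalty per revealed feature.'' With a flat penalty, revealing $k+1$ features is never worse than revealing $k$, and the budget is not enforced. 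To get a per-feature penalty you would need a dedicated poison group for each candidate $\z_i$ that can only afford to manipulate toward $\z_i$; and even then the penalty is Lagrangian, $\mathrm{cov}(I)-\lambda|I|$, whose maximizer need not satisfy $|I|=k$, so your claim that ``a threshold test on the final utility decides the original instance'' of budgeted Maximum Coverage does not follow (you could instead take $\lambda$ tiny and argue the penalized optimum reveals the minimum set cover size, but that is a different reduction than the one you describe). A further unaddressed modeling point: in this paper the system may only reveal elements of $\XX_R=\{\x_R(\x):\x\in\XX_-\}$, i.e., features that are the \emph{optimal} recourse action of some agent, so each $\z_i$ needs a dedicated agent for which it is the cheapest recourse; the paper arranges this by giving each candidate $\z_j$ a partner agent with $c_R(\x_j,\z_j)=0$.

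For comparison, the paper's proof sidesteps all of this by treating the budget $k$ as part of the system's problem (it appears as the constraint $\sum_h a_h = k$ in their ILP formulation) and reducing from Minimum $k$-Union: $c_R(\x_i,\z_j)=0$ iff $i=j$ (so $\z_j$ is agent $j$'s optimal zero-cost recourse and revealing it buys exactly one recourse taker), and $c_M(\x_i,\z_j)=1/2$ iff $s_i\in S_j$ (so the manipulators under index set $I$ are exactly $\cup_{j\in I}S_j$ minus the $k$ recourse takers). Maximizing $\TP-\FP$ then reduces to minimizing $\bigl|\cup_{j\in I}S_j\bigr|$ over $|I|=k$. If you want to keep your unbudgeted formulation, you must either repair the poison gadget as above and redo the Lagrangian argument, or adopt the paper's convention that the number of revealed actions is fixed.
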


\begin{proof}
\label{proof:nonseq_hard}
    To demonstrate the intractability of this objective, we reduce from the known NP-hard problem Minimum $k$-Union (M$k$U), an instance of which is defined via a universe of $n$ elements $U=\{s_1, \ldots, s_n\}$,a collection of $n$ sets $\textbf{S} = \{S_1, \ldots S_m\}$ with elements in $U$, and a budget $k$. 
    The objective in M$k$U is to select an index set $I$ of size exactly $k$ such that $\cup_{j\in I} S_j$ is minimized.
    Given an instance of M$k$U can be mapped to an instance of simultaneous recourse as follows.
    Let $\XX^{(0)}\times \ZZ = \big\{(\x, \z_j): s_i\in U \text{ and } S_j\in \mathbf{S} \big\}$, and define $c_R$ and $c_M$ as follows,
    \begin{align*}
            c_R(\x, \z_j) = \begin{cases}
                                    1 & \text{ if }~ i\neq j \\
                                    0 & \text{ if }~ i=j
                              \end{cases}\qquad
            c_M(\x, \z_j) = \begin{cases}
                                    1 & \text{ if }~ s_i\notin S_j \\
                                    1/2 & \text{ if }~ s_i\in S_j
                              \end{cases}
    \end{align*}
    Under this construction of the cost functions, each agent $\x$ will perform recourse if and only if $\z_i$ is revealed, and the disclosure probability $p = 1$.
    In the case that $\z_i$ is not revealed, the agent will elect to perform manipulation when any $\z_j$ is revealed where $j\neq i$ and $s_i\in S_j$.
    If neither criterion is met, the agent will elect to do nothing (remaining negatively classified).
    Combining these cases, we see that revealing each $\z_j$ causes exactly one agent to perform recourse, namely $\x_j$, and causes all $\x$ (with $s_i\in S_j$) to manipulate. 
    Let $I = \{j_1, \ldots, j_k\}$ be the index set of the revealed features, then the number of agents manipulating is equal to $\big|\cup_{j\in I} S_j \big| - k$. 
    Therefore providing $k$ recourse actions to agents while minimizing the number of agents manipulating is equivalent to minimizing  $\big|\cup_{j\in I} S_j \big|$.
\end{proof}

\subsection{Submodularity of the System's Utility}
% Recall the statement of \cref{thm:submodular-objective-function}:
\label{sec:submodular}
\begin{theorem}
\label{thm:submodular-objective-function}
The system's objective function is \emph{submodular} with respect to the size of the set of revealed features. 
\end{theorem}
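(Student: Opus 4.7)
The plan is to establish submodularity by decomposing $U(\ZZ_R)$ into per-agent contributions and then verifying the diminishing-returns inequality $U(A \cup \{\z\}) - U(A) \geq U(B \cup \{\z\}) - U(B)$ for any $A \subseteq B \subseteq \XX_R$ and any candidate $\z$. Writing $U(\ZZ_R) = U_{+} + \sum_{\x : f(\x)=0} u(\x, \ZZ_R)$, the first term collects the constant contribution of agents with $f(\x)=1$, while $u(\x, \ZZ_R)$ takes one of three values---$v_{\rec}(\x)$ if $\x$ recourses, $v_{\manip}(\x)$ if $\x$ manipulates, or $0$ if $\x$ does nothing---determined by the best response in \Cref{eqn:final-action}.

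First, I would characterize, for each agent $\x$, which revealed sets $\ZZ_R$ induce each of the three actions. The agent recourses only when $\x_R(\x) \in \ZZ_R$ and $c_R(\x, \x_R(\x)) < \min(1,\, \min_{\z' \in \ZZ} c_M(\x, \z'))$; they manipulate only when the cheapest available manipulation cost beats $1$ and also beats any available recourse cost. A key monotonicity is that $\min_{\z' \in \ZZ} c_M(\x, \z')$ is non-increasing as $\ZZ_R$ grows, so once an agent starts manipulating they continue to do so---possibly switching targets without any change in their utility contribution---and the only behavior transitions that affect utility are (i) do-nothing $\to$ manipulate, (ii) do-nothing $\to$ recourse, and (iii) recourse $\to$ manipulate. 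Because each transition is triggered by a single revealed element crossing a cost threshold, the collection of agents that actually transition upon adding $\z$ has a coverage-like structure over $\ZZ_R$.

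Next, I would compare the agents whose behavior changes when adding $\z$ to $A$ versus to $B$. Since the relevant minimum-cost thresholds only tighten as $\ZZ_R$ grows, the set of agents newly triggered into action by $\z$ is weakly smaller under the larger set $B$, and any agent who does transition under $B$ was already ``close'' to transitioning under $A$. Summing these per-agent differences and exploiting the sign structure of each of the three transition types would then yield the desired inequality $\Delta U(\z \mid A) \geq \Delta U(\z \mid B)$, establishing submodularity.

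The hardest step will be handling transitions of type (iii): when $\z$ is added, some previously-recoursing agents may abandon recourse to manipulate $\z$ instead, and the magnitude of this flip can differ between $A$ and $B$ because competing manipulation targets in $B \setminus A$ may already have induced the switch. I would need an exchange-style argument that pairs such switches across $A$ and $B$, or that isolates the single agent whose recourse action equals $\z$---whose contribution is bounded---and shows the aggregate manipulation-coverage effect on the remaining agents is genuinely submodular and dominates any local non-monotonic behavior from that single agent.
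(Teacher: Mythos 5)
There is a genuine gap here, and it is not just that your final paragraph openly defers the hardest step (the recourse\,$\to$\,manipulate flips) to an unspecified ``exchange-style argument.'' The direct route you propose --- decomposing $\TP-\FP$ into deterministic per-agent contributions $v_{\rec}(\x)>0$, $v_{\manip}(\x)<0$, or $0$ and verifying diminishing returns on that sum --- fails on a two-element example. Take a single negatively classified agent $\x$, let $r=\x_R(\x)$ be its recourse action and $m$ some other revealable feature with $c_R(\x,r)<c_M(\x,m)<1$, and compare $A=\emptyset$ with $B=\{r\}$ when adding $z=m$: the marginal under $A$ is $u(\{m\})-u(\emptyset)=v_{\manip}(\x)<0$ (the agent goes from doing nothing to manipulating), while the marginal under $B$ is $u(\{r,m\})-u(\{r\})=v_{\rec}(\x)-v_{\rec}(\x)=0$ (recourse remains cheaper, so nothing changes). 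Submodularity would require $v_{\manip}(\x)\geq 0$, which contradicts \cref{thm:expected-utility-change}. So the per-agent contribution is not submodular, the sign structure you hope to ``exploit'' works against you, and no pairing argument over a population built from such agents can rescue the inequality.

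The paper avoids this by proving submodularity of a different, reformulated objective: it precomputes for each agent the manipulation set $S_m(\x,\cdot)$ of revealed features cheaper than that agent's best recourse, observes that the agent manipulates exactly when the random disclosure process (each element revealed independently with probability $p$) hits this set, and shows that the \emph{expected number of manipulating agents}, $\sum_{\x}\bigl(1-\prod_{z}(1-p)\bigr)$, is a probabilistic-coverage function, hence submodular by the standard $p \geq p\cdot\prod_{t\in B\setminus A}(1-p)$ calculation. The disclosure probability $p$ is essential to that argument and never appears in your proposal; without the probabilistic coverage structure (or some other restriction that removes the recourse-gain term from the marginal), the deterministic $\TP-\FP$ objective you are working with does not have the claimed property.
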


\begin{proof}
Given a revealed set $\ZZ \subseteq \XX_R$, for agent with feature $\x\in \XX_{-}$, let $S_m(\x, \ZZ):= \{z \in \ZZ: c_M(\x, z) \leq c_R(\x, \z_R(\x, \ZZ))\}$ 
be the set of manipulation features that are cheaper than the minimum recourse action $\z_R(x, \ZZ)$ given the revealed set $\ZZ$. 
Then the agent will perform recourse if and only if $S_m(\x, \ZZ) = \emptyset$. 
Given the cost function $c_M$ and $c_R$, the principal can pre-compute each agent's manipulation set $S_m(\x, \ZZ)$. 
% \yc{notice that if there's a do-nothing action, it doesn't change the expression of $S_m$ ...}

The probability for the manipulation set $S_m(\x, \ZZ)$ to overlap with a given revealed set $\ZZ$
is $P(\x; \ZZ) = \Pi_{z\in \ZZ_m(\x, \ZZ)}(1 - p)$, where $p$ is the disclosure probability for any criteria $\z$. 

% In our setting,  we restrict the revealing set $S$ to be only the criteria that are optimal recourse actions, namely:
% $$\ZZ_{\OPT} = \{x_R |~\exists~ x\in \XX, ~~s.t.~~ c_R(x, \z_R)\leq c_M(x, x'), \forall x'\in \ZZ\}$$

The goal for the system is to select a disclosure set $\ZZ \subseteq \XX_R$ to \emph{minimize} the overlap between $\ZZ$ and $S_m(\x, \ZZ)$ for all agents, namely:
\begin{align}
\label{eqn:obj-prob-disclosure}
    \min_{\ZZ \subseteq \XX_R} u(\ZZ, \XX_{-})
    & := \sum_{\x\in \XX_{-}} \left(1 - P (\x; \ZZ)\right) = \sum_{\x\in \XX_{-}} \left(1 - \Pi_{z\in \ZZ_m(\x, \ZZ)}(1 - p)\right) %~~ \text{where}~~ S_m(\x) := \{\z \in \Z: c_M(\\x, \z) \leq c_R(\\x, \z_{i, R})\}
\end{align}

To ease the notation, we use $u(\ZZ)$ to shorthand $u(\ZZ, \XX_{-})$ since $\XX_{-}$ is fixed in our setting.
To show that \Cref{eqn:obj-prob-disclosure} is submodular, it is equivalent to prove that the objective function $u(\ZZ, \X_{-})$
satisfies the \emph{diminishing returns property}, which means $\forall A, B\subseteq \ZZ$ with $A\subseteq B \subseteq \ZZ$, and any criteria $z\in \ZZ \backslash B$, we want to show
    \begin{align*}
        u(A \cup \{z\}) - u(A) \geq u(B \cup \{z\}) - u(B)
    \end{align*}

Only four types of agents could potentially contribute to the marginal gain for $U$ when the revealed sets are $A \cup \{z\}$  v.s. $B  \cup \{z\}$:
\begin{enumerate}
    \item when $S_m(\x, B \cup \{z\}) = B \cup \{z\}$
    \item when $S_m(\x, B \cup \{z\}) = A \cup \{z\}$
    \item when $S_m(\x, B \cup \{z\}) = \{z\} $
    \item when $S_m(\x, B \cup \{z\}) = B\backslash A \cup \{z\}$
\end{enumerate}

For the first three cases, we can verify that the two marginal gains are the same. For the last case, the two marginal gains are:
    \begin{align*}
        u(A + \{z\}) -u(A) &= \left[1 - (1 - p)\right] - 0 = p\\
        u(B + \{z\}) - u(B) &=
        [1 - \Pi_{t\in \{B\backslash A \cup \{z\}\}}(1 - p)]
        - [1 - \Pi_{t\in \{B\backslash A \}}(1 - p)]\\
        &= p \times \Pi_{t\in \{B\backslash A \}}(1 - p)\\
        &\leq p
    \end{align*}

Since this holds for all agents, we show that adding a criterion $z$ to a larger set $B$ provides an equal or smaller marginal gain in the objective function compared to adding it to a smaller set $A$, satisfying the diminishing returns property. Therefore, the objective function defined in \Cref{eqn:obj-prob-disclosure} is submodular.

\end{proof}

\subsection{Proof for \Cref{thm:expected-utility-change}}
% Recall the statement for \cref{thm:expected-utility-change}:

% \begin{theorem*}[System's Expected Utility Changes]
%     The system's expected utility increases for each recourse action taken by agents and decreases for every manipulation action taken by agents.
%     % When the classifier used by the system is better than random guessing, which means that $f(x) = 1$ implies $\Pr[y(x) = 1| X = x]\geq 0.5$, then the system's utility is monotonically increasing in each recourse action taken by an agent in expectation but will be monotonically decreasing in each manipulation action taken by an agent.
% \end{theorem*}
\begin{proof}
\label{proof:thm-expected-utility-change}
    Notice that only agents $\x \in X^{(0)}$ who are originally negatively classified would request a recourse from the system in the first place, and both the recourse action and the manipulation actions that they are potentially going to take will be positively classified by the system.  
    From the system's perspective, when the classifier is non-trivial (better than random guessing), all positively classified $\x$ are more likely to have true label $y = 1$, and all negatively classified $\x$ are more likely to have true label $0$. 
    When an agent with feature $\x$ takes recourse, the expected system utility change is:
    \begin{align*}
        \Delta(\text{System's Utility}) (\x \rightarrow \z_R) &= \big(\mathbbm{1}[y(\z_R) = 1, f(\z_R) = 1] - \mathbbm{1}[y(\z_R)= -1, f(\z_R) = 1]\big) - 0  \\
        &= 2\Pr[y(\z_R) = 1 | X = \z_R] - 1  \geq 0 && \tag{f is a non-trivial classifier, and $f(\z_R) = 1$ }
    \end{align*}
    Similarly, when the agent takes manipulation, the expected system utility change is:
    \begin{align*}
        \Delta(\text{System's Utility}) (\x \rightarrow \z_M) &= 
        \big(\mathbbm{1}[y(\x) = 1, f(\z_M) = 1] - \mathbbm{1}[y(\x)= -1, f(\z_M) = 1]\big) - 0 \\
        &= 2\Pr[y(\x) = 1 | X = \x] - 1 \leq 0  \tag{Since f is a non-trivial classifier, and $f(\x) = 0$}
    \end{align*}
    When the agent performs do-nothing, the system utility remains the same. 
\end{proof}

%%%%%%%%%%%%%%%%%%%%%%%%%%%%% SECTION 4 %%%%%%%%%%%%%%%%%%%%%%%%%
\section{Proof for Section \ref{sec:cost}}

We first prove a theorem on the monotonicity of social cost.

\begin{theorem*}[Monotonicity of Social Cost]
%\label{thm:social-cost}
    When the recourse cost $c_R(x, x')$ is monotonic in $\|x - x'\|$, and consider a linear threshold classifier. The social cost monotonically decreases in the easiest obtained recourse action.
\end{theorem*}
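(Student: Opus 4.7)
The plan has three ingredients. First, I would pin down the baseline cost $c_R(\x, \x_R(\x))$ using the threshold structure. After the score-space reduction the paper alludes to (Lemma 3.1 of \cite{milli2019social}), we may treat $f$ as a univariate threshold $f(x) = \mathbbm{1}[x \geq \theta]$, so $\X_{+} = \{x' : x' \geq \theta\}$ is a half-line. Because $c_R(\x, x')$ is monotonic in $\|\x - x'\|$, the minimizer in Eq.~\ref{eq:opt_rec} for any $\x < \theta$ is the boundary point $\x_R(\x) = \theta$ itself. In particular $c_R(\x, \x_R(\x))$ depends only on $|\x - \theta|$ and is a constant once the population and classifier are fixed, independent of the reveal set $\ZZ$.

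Second, I would decompose the social cost from Definition~\ref{def:social-cost} as
$$\text{cost}(\ZZ, \XX_{-}) \;=\; \sum_{\x \in \XX_{-}} c_R\bigl(\x, \z_R(\x, \ZZ)\bigr) \;-\; \sum_{\x \in \XX_{-}} c_R(\x, \theta).$$
By step one the second sum is a $\ZZ$-independent constant, so the $\ZZ$-dependence is carried entirely by the first sum, whose summands are exactly the costs of each agent's ``easiest obtained recourse action'' $\z_R(\x,\ZZ) = \argmin_{\z \in \ZZ} c_R(\x, \z)$.

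Third, I would apply monotonicity of $c_R$ in distance one more time, agent by agent. Formalizing the claim as a comparison between two reveal sets $\ZZ_1$ and $\ZZ_2$ for which every agent's easiest-obtained recourse becomes cheaper (equivalently, closer), i.e.\ $\|\x - \z_R(\x, \ZZ_2)\| \leq \|\x - \z_R(\x, \ZZ_1)\|$ for all $\x \in \XX_{-}$, monotonicity of $c_R$ yields $c_R(\x, \z_R(\x, \ZZ_2)) \leq c_R(\x, \z_R(\x, \ZZ_1))$ pointwise. Summing over $\XX_{-}$ and adding back the fixed baseline gives $\text{cost}(\ZZ_2, \XX_{-}) \leq \text{cost}(\ZZ_1, \XX_{-})$, as desired.

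The main obstacle is not any hard calculation but rather making the statement precise: ``monotonically decreases in the easiest obtained recourse action'' is ambiguous until one fixes what is being varied. My plan handles this by phrasing the monotonicity as a pointwise comparison on $\{c_R(\x, \z_R(\x, \ZZ))\}_{\x}$; the threshold hypothesis is used purely to ensure the baseline $c_R(\x, \x_R(\x))$ is well-defined and $\ZZ$-independent, and the monotonicity of $c_R$ in $\|\x - \x'\|$ does all of the real work in pushing the per-agent comparison through.
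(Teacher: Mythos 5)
Your proposal is correct and follows essentially the same route as the paper's proof: reduce to a univariate threshold classifier, observe that the non-strategic baseline $\x_R(\x)$ is the boundary point $\theta$ (so the second sum is a $\ZZ$-independent constant), and note that the remaining term is monotone in how close each agent's cheapest revealed action is. The only difference is cosmetic — the paper specializes to $c_R(x,x') = w_R\|x-x'\|$ to obtain the closed form $|\XX_{-}|\, w_R\,(\min_{z\in\ZZ} z - \tau)$, whereas you keep $c_R$ general and argue pointwise, which is marginally more general but not a different argument.
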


\begin{proof}
\label{proof:thm-social-cost}
Consider a 1-dimensional setting, where the system uses a linear threshold classifier $f(x) = \mathbbm{1}[x\geq \tau]$. In this case, the optimal recourse action for any agent is always the minimum recourse action that has been revealed so far, namely $z_{\min} = \min_{z\in \ZZ} z$.
Recall the definition of the social cost:
    \begin{align*}
      \text{cost}(\ZZ, \XX_{-}) 
      = \sum_{\x\in \XX_{-}}\left(c_R(\x, \z_R(\x, \ZZ) - c_R(\x, \x_R)\right), \text{where}~
        \z_R(\x, \ZZ) = \argmin_{\z\in \ZZ} c_R(\x, \z)
    \end{align*}
When the cost function is monotonic in the $\ell_2$ norm, e.g., $c_R(x, x') = w_R \cdot \|x - x'\|$, we have
\begin{align*}
    c_R(\x, \z_R(\x, \ZZ)) &= w_R\cdot \| x - \z_R(\x, \ZZ) \| = w_R\cdot \min_{z\in \ZZ} \|x - z\| = w_R\cdot \left(\min_{z\in \ZZ} z - x\right)\\
    c_R(\x, \x_R) &= w_R\cdot \|\x - \x_R \| = w_R\cdot \|\x - \tau \| = w_R\cdot \left(\tau - x\right)
\end{align*}
Thus,
\begin{align*}
      \text{cost}(\ZZ, \XX_{-}) 
      &= \sum_{\x\in \XX_{-}}\left(c_R(\x, \z_R(\x, \ZZ)) - c_R(\x, \z_R)\right) \\
      &= \sum_{\x\in \XX_{-}} \left[w_R\cdot \left(\min_{z\in \ZZ} z - x\right) - w_R\cdot \left(\tau - x\right) \right]\\
      &= |\XX_{-}|\cdot w_R\cdot \left(\min_{z\in \XX_{-}} z - \tau \right)
    \end{align*}
As the size of $\ZZ$ gets larger (more recourse actions get revealed), $\min_{z\in \ZZ} z$ will be non-increasing, which means that $\text{cost}(\ZZ, \XX_{-})$ is monotonically decreasing.
    
\end{proof}

%%%%%%%%%%%%%%%%%%%%%%%%%%% SECTION 5 %%%%%%%%%%%%%%%%%%%%%%%%%%%

\section{Proofs for Section \ref{sec:subsidy}}

\subsection{Proof for Theorem \ref{thm:sub-rec-rate}}
% Recall the statement of Theorem \ref{thm:sub-rec-rate}:

% \begin{theorem*}[Subsidy Influence on Recourse Rate]
%     Given a reveal set $\ZZ$, the recourse rate $\rec(\ZZ, \XX_{-}, \alpha)$ is a monotonically 
%     % non-decreasing 
%     increasing
%     function of subsidies $\alpha$.
% \end{theorem*}
\begin{proof}
    \label{proof:thm-sub-rec-rate}
    Recall that given a revealed set $\ZZ$, with subsidy $\alpha$, the corresponding recourse rate becomes:
    \begin{align*}
        \rec(\ZZ, \XX_{-}; \alpha) = \frac{\underset{\x\in\XX_{-}}{\sum} \mathbbm{1}
        \bigg[
        \underset{\z'\in\ZZ}{\min}~c_R(\x, \z'; \alpha) < \min\big(1,~\underset{\z''\in\ZZ}{\min}~c_M(\x, \z'')\big) 
        \bigg]}{|\XX_{-}|}
    \end{align*}
    In particular, with subsidy $\alpha$, the cost of recourse becomes $(1 - \alpha)\cdot c_R(\x, \z')$, the cost of manipulation remains the same. Both optimal actions $\z_R$ and $\z_M$ remain the same.

    Thus, for the nominator, we have:
    \begin{align*}
        &\underset{\x\in\XX_{-}}{\sum} \mathbbm{1}
        \bigg[
        \underset{\z'\in\ZZ}{\min}~c_R(\x, \z'; \alpha) \leq \min\big(1,~\underset{\z''\in\ZZ}{\min}~c_M(\x, \z'')\big) 
        \bigg] \\
        =& 
        \underset{\x\in\XX_{-}}{\sum} \mathbbm{1}
        \bigg[
        \underset{\z'\in\ZZ}{\min}~ (1 - \alpha)\cdot c_R(\x, \z') \leq \min\big(1,~\underset{\z''\in\ZZ}{\min}~c_M(\x, \z'')\big) 
        \bigg]\\
        =& \underset{\x\in\XX_{-}}{\sum} \mathbbm{1}
        \bigg[(1 - \alpha)\cdot
        \underbrace{\underset{\z'\in\ZZ}{\min}~  c_R(\x, \z')}_{\text{fixed}} \leq \underbrace{\min\big(1,~\underset{\z''\in\ZZ}{\min}~c_M(\x, \z'')}_{\text{fixed}}\big) 
        \bigg]\\
        =& \underset{\x\in\XX_{-}}{\sum} \mathbbm{1}
        \bigg[(1 - \alpha)\cdot
        \underbrace{\underset{\z'\in\ZZ}{\min}~  c_R(\x, \z')}_{\text{fixed for a particular x}} \leq \underbrace{\min\big(1,~\underset{\z''\in\ZZ}{\min}~c_M(\x, \z'')}_{\text{fixed for a particular x}}\big) 
        \bigg]\\
        =& \underset{\x\in\XX_{-}}{\sum} \mathbbm{1}
        \bigg[\alpha\geq \underbrace{1 - \frac{\min\big(1,~\underset{\z''\in\ZZ}{\min}~c_M(\x, \z'')\big)}{\underset{\z'\in\ZZ}{\min}~  c_R(\x, \z')}}_{\text{fixed for a particular x}} 
        \bigg]
    \end{align*}
As $\alpha$ becomes larger, this quantity will be non-decreasing. This implies that the recourse rate is a monotonically non-decreasing function of subsidy for a given revealed set $\ZZ$. 
\end{proof}

\subsection{Proof for \cref{thm:sub-social-cost}}
% Recall the statement for \cref{thm:sub-social-cost}:

% \begin{theorem*}[Subsidy Influence on Social Cost]
%    % \label{thm:sub-social-cost}
%     Given a revealed set $\ZZ$, 
%     % when the recourse cost function $c_R(x, x')$ is monotonic in $\|x - x'\|$, 
%     the social cost $\text{cost}(\ZZ, \XX_{-}; \alpha)$ is monotonically 
%     % non-increasing 
%     decreasing
%     in subsidies.
% \end{theorem*}

\begin{proof}
    \label{proof:sub-social-cost}
    Again, consider a 1-dimensional setting, where the system uses a linear threshold classifier $f(x) = \mathbbm{1}[x\geq \tau]$. In this case, the optimal recourse action for any agent is always the minimum recourse actions that has been revealed so far, namely $z_{\min} = \min_{z\in \ZZ} z$.
    Recall the definition of the social cost with subsidy level $\alpha$:
    \begin{align*}
      \text{cost}(\ZZ, \XX_{-}; \alpha) 
      = \sum_{\x\in \XX_{-}}\big(c_R(\x, \z_R(\x, \ZZ; \alpha);\alpha) - c_R(\x, \z_R)\big), \text{where}~
        \z_R(\x, \ZZ; \alpha) = \argmin_{\z\in \ZZ} (1 - \alpha)c_R(\x, \z)
    \end{align*}
    In the 1-dimension case, we have
    \begin{align*}
        c_R(\x, \z_R(\x, \ZZ; \alpha); \alpha) &= (1 - \alpha) \cdot w_R\cdot \| \x - \z_R(\x, \ZZ; \alpha) \| = (1 - \alpha)\cdot w_R\cdot \min_{z\in \ZZ} \|x - z\| = (1 - \alpha)\cdot w_R\cdot \left(\min_{z\in \ZZ} z - x\right)\\
        c_R(\x, \z_R; \alpha) &= (1 - \alpha) \cdot w_R\cdot \|\x - \z_R \| = (1 - \alpha)\cdot w_R\cdot \|\x - \tau \| = (1 - \alpha)\cdot w_R\cdot \left(\tau - x\right)
    \end{align*}
    Thus,
    \begin{align*}
          \text{cost}(\ZZ, \XX_{-};\alpha) 
          &= \sum_{\x\in \XX_{-}}\left(c_R(\x, \z_R(\x; \ZZ; \alpha)) - c_R(\x, \x_R;\alpha)\right) \\
          &= \sum_{\x\in \XX_{-}} \left[(1 - \alpha)\cdot w_R\cdot \left(\min_{z\in \ZZ} z - x\right) - (1 - \alpha)\cdot w_R\cdot \left(\tau - x\right) \right]\\
          &= (1 - \alpha)\cdot |\XX_{-}|\cdot w_R\cdot \left(\min_{z\in \ZZ} z - \tau \right)
    \end{align*}
    As the level of subsidy gets larger ($\alpha$ gets bigger, cheaper to perform recourse), $\text{cost}(\ZZ, \XX_{-}; \alpha)$ will get smaller, which corresponds to a smaller social cost.   
    
\end{proof}

\subsection{Proof for Theorem \ref{thm:sub-utility}}

% Recall the theorem statement for Theorem \ref{thm:sub-utility}:

% \begin{theorem*}[Subsidy's Influence on System's Utility]
% \label{thm:sub-utility}
%     Given a revealed set $\ZZ$, when both $c_R(x, x')$ and $c_M(x, x')$ are monotonic in $\|x - x'\|$ and only cross once, the system utility is monotonically increasing in subsidies.
% \end{theorem*}

\begin{proof*}
\label{proof:thm-sub-utility}
The system utility is defined as the difference between true positive and false positive \emph{after} agent's actions. Let $\Pr[Y = 1 | X = x]$ be the true qualification rate given a feature $X = x$, and assume it's also monotonic in $X$. $u_0$ is the system's initial utility (before providing recourse).

Let the recourse region $R_R$ and manipulation region $R_M$ are defined as:
    \begin{align*}
        R_M = \{x\in \XX^{(0)}: c_M(x, z_{\min}) < \min(1, c_R(x, z_{\min}))\}\\
        R_R= \{x\in \XX^{(0)}: c_R(x, z_{\min}) < \min(1, c_M(x, z_{\min}))\}
    \end{align*}
    where $\X^{(0)}$ is the set of negatively classified agents.
    Then we have
\begin{align*}
    \text{System's utility}(z_{\min}) &= \TP - \FP \\
    &= u_0 + \underbrace{\int_{x\in R_M}\Pr(y = 1|X = x)dx}_{\TP~\text{from agents taking manipulation}} + \underbrace{\int_{x\in R_R}\Pr(y = 1|X = z_{min})dx}_{\TP~\text{from agents taking recourse}}\\
    & \qquad - \underbrace{\int_{x\in R_M} \left(1 - \Pr(y = 1|X = x)\right)dx}_{\FP~\text{from agents taking manipulation}} - \underbrace{\int_{x\in R_R} (1 - \Pr(y = 1|X = z_{min}))dx}_{\FP~\text{from agents taking recourse}}\\
    &= u_0 +  \int_{x\in R_M} \left(2\cdot \Pr(y = 1|X = x) - 1\right)dx + \int_{x\in R_R}\left(2\Pr(y = 1|X = z_{\min}) - 1\right)dx\\
    &= u_0 + \int_{x\in R_M} \left(2\cdot \Pr(y = 1|X = x) - 1\right)dx + \left(2\Pr(y = 1|X = z_{\min}) - 1\right) \int_{x\in R_R}dx
\end{align*}

    where $z_{\min} = \argmin_{z\in \ZZ} z$ is the cheapest recourse actions.

    Useful facts:
    \begin{enumerate}
        \item Suppose the classifier is a threshold classifier: $f = \mathbbm{I}[x\geq \theta]$, we can further characterize the  $\X^{(0)} = \{x\in \X: x\leq \theta\}$. 
        \item the minimum value of $z_{\min}$ is $\theta$ (the decision boundary).
        \item Since $\Pr[y = 1| X = x]$ is monotonic in x, $\forall x\in R_M, \Pr[y = 1| X = x] \leq \Pr[y = 1| X = \z_{min}]$
    \end{enumerate}

When we change the subsidy level $\alpha$, the two regions change as: 
    \begin{align*}
        \X^{(\alpha)}_M = \{x\in \XX^{(0)}: c_M(x, z_{\min}) < \min(1, c_R(x, z_{\min}; \alpha))\}\\
        R_R^{(\alpha)}= \{x\in \XX^{(0)}: c_R(x, z_{\min};\alpha) < \min(1, c_M(x, z_{\min};\alpha))\}
    \end{align*}
where $c_R(x, x';\alpha) = (1 - \alpha) \cdot c_R(x, x')$. As $\alpha$ becomes larger, we should expect $|\X_{R}^{(\alpha)}|$ to be larger and $|\X_{M}^{(\alpha)}|$ to be smaller. 
% \begin{align*}
%         \big|R_R^{(\alpha)}\big| =
%         \underset{\x\in\XX_{-}}{\sum} \mathbbm{1}
%         \bigg[\alpha \geq 
%         \underbrace{1 - \frac{\min (1,c_M(\x, z_{\min}))}
%         {c_R(\x, z_{\min})}}_{\text{fixed for a particular x}} 
%         \bigg]
%     \end{align*}
%     \begin{align*}
%         |\X^{(\alpha)}_M| = \sum_{x\in \X} \mathbbm{1}\left[\right ]
%     \end{align*}

% The decreasing size of $R_M$ due to subsidy can be characterized by
% \begin{align*}
%     \{x\in \X^{(0)}: \min(1, (1- \alpha)\cdot c_R(x, z_{\min}))\leq c_M(x, z_{\min}) < \min(1, c_R(x, z_{\min}))
% \end{align*}
% The increasing size of $R_R$ due to subsidy can be characterized by:
% \begin{align*}
%     \{x\in \X^{(0)}: \min(1, c_M(x, z_{\min}))\leq c_R(x, z_{\min}) < \frac{1}{1 - \alpha}\min(1, c_M(x, z_{\min}))\}
% \end{align*}

When $c_R(x, x')$ and $c_M(x, x')$ are both monotonic in $\|x-x'\|$ and only cross once. wlog, assume 
\begin{align*}
    c_M(x, x') = \|x - x'\|, c_R(x, x';\alpha) = \alpha \cdot w_R\cdot \|x - x'\| + b  ~~( 0<w_R\leq 1, b<1 \text{to guarantee they only cross once})
\end{align*}

we can further characterize the two regions:
\begin{align*}
    \X^{(a)}_M = \{x: x\in [z_{\min} - \sqrt{\frac{b}{1 - \alpha\cdot w_R}}, \theta]\}, \X^{(a)}_R = \{x: x\in [z_{\min} - \sqrt{\frac{1 - b}{\alpha \cdot w_R}}, z_{\min} - \sqrt{\frac{b}{1 - \alpha\cdot w_R}}]\}
\end{align*}
which gives us the size for the two regions as:
\begin{align*}
     \left|\X^{(a)}_M \right|= \theta - z_{\min} + \sqrt{\frac{b}{1 - \alpha\cdot w_R}}, ~~~
     \left|\X^{(a)}_R \right|=  \sqrt{\frac{1 - b}{\alpha \cdot w_R}} - \sqrt{\frac{b}{1 - \alpha\cdot w_R}}  
\end{align*}
For $\alpha\in [0,1]$, the rate in which the size of $\X^{(a)}_M$ and $\X^{(a)}_R$ changes as a function of the subsidy level $\alpha$ can be expressed as:   
\begin{align*}
   \frac{\partial |\X^{(a)}_M|}{\partial \alpha} &= \frac{1}{2} \cdot b^{1 / 2} \cdot w \cdot(1-a w)^{-3 / 2} , \\
   \frac{\partial |\X^{(a)}_R|}{\partial \alpha} &=
   -\frac{1}{2}\sqrt{\frac{1-b}{w}}\cdot a^{-3/2} - \frac{1}{2} \cdot b^{1 / 2} \cdot w \cdot(1-a w)^{-3 / 2} 
\end{align*}
we can see the increase rate in the size of $R_R^{(\alpha)}$ is higher than the decrease rate in the size of $R_M^{(\alpha)}$. This, together with the fact that useful fact (3), tell us that the system's utility will be a monotonically increasing function in subsidy level $\alpha$.
\end{proof*}

\subsection{Proof for Theorem \ref{thm:sub-recourse-rate-diff}}

% Recall the statement of Theorem \ref{thm:sub-recourse-rate-diff}:
% \begin{theorem*}[Subsidy's Influence on Recourse Rate Disparity]
% \label{thm:sub-recourse-rate-diff}
%    Given two groups $g_0$ and $g_1$ of relatively equal negatively classified agents size $|\XX_{-}^{(g_0)}|\approx |\XX_{-}^{(g_1)}|$, there exists a subsidy level $0 \leq \alpha^*\leq 1$, such that $\forall \alpha \geq \alpha^*$, the recourse rate difference monotonically decreases.
% \end{theorem*}

\begin{proof}
\label{proof:sub-recourse-rate-diff}
    Recall from the proof for the recourse rate with subsidy, for a particular reveal set $\ZZ$ and a given set of negatively classified feature set $\XX_{-}$, we have:
    \begin{align*}
        \rec(\ZZ, \XX_{-}; \alpha) = \frac{\underset{\x\in\XX_{-}}{\sum} \indicator
        \bigg[\alpha\geq {1 - \frac{\min\big(1,~\underset{\z''\in\ZZ}{\min}~c_M(\x, \z'')\big)}{\underset{\z'\in\ZZ}{\min}~  c_R(\x, \z')}}
        \bigg]}{|\XX_{-}|}
    \end{align*}
To ease the notation, let's define $\gamma(x) = \frac{\underset{\x\in\XX_{-}}{\sum} 
\indicator 
        \bigg[\alpha\geq {1 - \frac{\min\big(1,~\underset{\z''\in\ZZ}{\min}~c_M(\x, \z'')\big)}{\underset{\z'\in\ZZ}{\min}~  c_R(\x, \z')}}
        \bigg]}{|\XX_{-}|}$. 
Plug the expression into the definition for the disparity in recourse ratio for two groups $g_0, g_1$, we have:
\begin{align*}
    \textit{Diff}^{(\rec)}(\ZZ, \XX_{-}^{(g_0)}, \XX_{-}^{(g_1)})
    =& \left|\rec(\ZZ, \XX_{-}^{(g_1)}, \alpha) - \rec(\ZZ, \XX_{-}^{(g_0)}, \alpha)\right|   \\
    =& \left|\frac{\underset{\x\in \XX_{-}^{(g_1)}}{\sum} \mathbbm{1}
        \bigg[\alpha\geq {1 - \gamma(x)}
        \bigg]}{|\XX_{-}^{(g_1)}|}
        -\frac{\underset{\x\in\XX_{-}^{(g_0)}}{\sum} \mathbbm{1}
        \bigg[\alpha\geq {1 - \gamma(x)}
        \bigg]}{|\XX_{-}^{(g_0)}|}
        \right|
\end{align*}

when the size of the two groups are similar, namely when $|\XX_{-}^{(g_0)}|\approx |\XX_{-}^{(g_1)}| $, we can roughly approximate the recourse difference by:
\begin{align*}
     \textit{Diff}^{(\rec)}(\ZZ, \XX_{-}^{(g_0)}, \XX_{-}^{(g_1)}, \alpha)
    \approxeq & \left|\underset{\x\in \XX_{-}^{(g_1)}}{\sum} \mathbbm{1}
        \bigg[\alpha\geq {1 - \gamma(x)}
        \bigg]
        -\underset{\x\in\XX_{-}^{(g_0)}}{\sum} \mathbbm{1}
        \bigg[\alpha\geq {1 - \gamma(x)}
        \bigg]\right|\\
\end{align*}
We make the following observation:
\begin{itemize}
    \item When $\alpha = 0$: it corresponds to the situation where no subsidy is provided. This is the original disparity $ \textit{Diff}^{(\rec)}(\ZZ, \XX_{-}^{(g_0)}, \XX_{-}^{(g_1)})$.  
    \item When $\alpha = \alpha_{\max} = 1$, it corresponds to when the cost of recourse is $0$, in this case, everyone takes recourse, which means the recourse difference is zero. 
    Since $1 - \gamma(x)\leq 1 = \alpha_{\max}$ is also an upper bound on the value $1 = \gamma(x)$ for all $x\in \XX_{-}$.
\end{itemize}
For each group $g_0$ and $g_1$, if we rank $x$ by their $ 1- \gamma(x)$ value, then as we move $\alpha$ from $0$ to $1$, all the points that are to the left of the $\alpha$ will be counted towards $\mathbbm{1}[\alpha \geq 1 - \gamma(x)]$. Thus the disparity will depend on the distribution of $1 - \gamma(x)$, which will mainly depend on the distribution of $x$, as well as the cost functions $c_R$ and $c_M$. 
However, we are guaranteed to at least find an $1<\alpha^*< 1$, such that after $\alpha > \alpha^*$, there is only one $x\in \XX^{(g_0)}_{-}$ such that $\alpha \geq 1 - \gamma(x)$ is true. In this case, increasing $\alpha$ will only leads to decreasing in the disparity.
\end{proof}

\subsection{Proof for \cref{thm:sub-social-cost-diff}}

Recall the statement of \cref{thm:sub-social-cost-diff}:
\begin{theorem*}[Subsidy Influence on Social Cost Disparity]
   % \label{thm:sub-social-cost-diff}
    With subsidy $\alpha$, the disparity in social cost for two group $g_0, g_1$ becomes:  
    \begin{align*}
         \textit{Diff}^{(\text{cost})}(\ZZ, \XX^{(g_0)}, \XX^{(g_1)};\alpha):= \left|\text{cost}(\ZZ, \XX_{-}^{(g_1)};\alpha) - \text{cost}(\ZZ, \XX_{-}^{(g_0)};\alpha)\right|   
    \end{align*}
Given a revealed set $\ZZ$, 
the social cost difference monotonically decreases in subsidies. 
\end{theorem*}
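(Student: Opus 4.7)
The plan is to reduce the disparity statement to \cref{thm:sub-social-cost} by exploiting the multiplicative structure of how subsidies enter the recourse cost. The key observation is that $c_R(\x,\z;\alpha)=(1-\alpha)\,c_R(\x,\z)$ is a non-negative scalar rescaling of $c_R(\x,\z)$, so it does not affect any $\argmin$; in particular, the agent's chosen recourse target $\z_R(\x,\ZZ;\alpha)=\argmin_{\z\in\ZZ}(1-\alpha)c_R(\x,\z)$ coincides with the unsubsidized $\z_R(\x,\ZZ)$, and the system-ideal recourse $\x_R$ is unaffected as well.

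Given this, I would first factor $(1-\alpha)$ out of each summand in the per-group social cost: for any group $g$,
\begin{align*}
\text{cost}(\ZZ,\XX_{-}^{(g)};\alpha)
&=\sum_{\x\in\XX_{-}^{(g)}}\!\bigl((1-\alpha)c_R(\x,\z_R(\x,\ZZ))-(1-\alpha)c_R(\x,\x_R)\bigr)\\
&=(1-\alpha)\,\text{cost}(\ZZ,\XX_{-}^{(g)}).
\end{align*}
This mirrors the computation in the proof of \cref{thm:sub-social-cost} but keeps the group index explicit.

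Next, I would substitute into the definition of $\textit{Diff}^{(\text{cost})}$ and pull the common non-negative factor $(1-\alpha)$ outside the absolute value:
\begin{align*}
\textit{Diff}^{(\text{cost})}(\ZZ,\XX^{(g_0)},\XX^{(g_1)};\alpha)
&=\bigl|(1-\alpha)\text{cost}(\ZZ,\XX_{-}^{(g_1)})-(1-\alpha)\text{cost}(\ZZ,\XX_{-}^{(g_0)})\bigr|\\
&=(1-\alpha)\,\textit{Diff}^{(\text{cost})}(\ZZ,\XX^{(g_0)},\XX^{(g_1)};0).
\end{align*}
Since $\alpha\mapsto(1-\alpha)$ is monotonically decreasing on $[0,1]$ and the base disparity at $\alpha=0$ is a fixed non-negative constant, the disparity is monotonically non-increasing in $\alpha$, vanishing at $\alpha=1$.

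I do not anticipate a real obstacle; the only subtlety is confirming that both the chosen revealed target $\z_R(\x,\ZZ;\alpha)$ and the default target $\x_R$ are invariant to $\alpha$, which follows immediately from the positivity of the scalar $(1-\alpha)$ for $\alpha\in[0,1)$ (and the boundary case $\alpha=1$ makes the cost identically zero, which is consistent with the statement). The argument is essentially a one-line linearity observation combined with the triangle-/scalar-factorization identity for absolute values.
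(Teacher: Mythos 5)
Your proof is correct, and it rests on the same key identity the paper uses: the per-group social cost scales as $(1-\alpha)$ times its unsubsidized value, so the disparity is $(1-\alpha)$ times a fixed non-negative constant and hence non-increasing in $\alpha$. The route to that identity differs slightly. The paper works in the univariate linear-threshold setting it adopts for all of Section 5 (i.e., $f(x)=\mathbbm{1}[x\geq\tau]$ with $c_R(x,x')=w_R\|x-x'\|$), computes the closed form $\text{cost}(\ZZ,\XX_{-};\alpha)=(1-\alpha)\,|\XX_{-}|\,w_R\,(\min_{z\in\ZZ}z-\tau)$, and reads off the factorization $\textit{Diff}^{(\text{cost})}(\cdot;\alpha)=\left|(1-\alpha)(|\XX_{-}^{(g_0)}|-|\XX_{-}^{(g_1)}|)(\min_{z\in\ZZ}z-\tau)\right|$. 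You instead argue abstractly: since $(1-\alpha)>0$ does not change any $\argmin$, both $\z_R(\x,\ZZ;\alpha)$ and $\x_R$ are invariant to $\alpha$, so $(1-\alpha)$ factors out of each summand for an arbitrary cost function and feature dimension. Your version is therefore marginally more general (it does not need the univariate specialization), while the paper's version additionally exposes that, under its assumptions, the residual disparity is driven purely by the difference in group sizes $|\XX_{-}^{(g_0)}|-|\XX_{-}^{(g_1)}|$. The one caveat, shared by both arguments, is that the conclusion is strictly "non-increasing" rather than strictly decreasing when the baseline disparity is zero.
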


\begin{proof}
\label{proof:thm-sub-social-cost}
Recall the definition of social cost difference:
\begin{align*}
    \textit{Diff}^{(\text{cost})}(S, \XX_{-}^{(g_0)}, \XX_{-}^{(g_1)}):= \left|\text{cost}(S, \XX_{-}^{(g_1)}) - \text{cost}(S, \XX_{-}^{(g_0)})\right|
\end{align*}
Again, consider a 1-dimensional setting, where the system uses a linear threshold classifier $f(x) = \mathbbm{1}[x\geq \tau]$. In this case, the optimal recourse action for any agent is always the minimum recourse actions that has been revealed so far, namely $z_{\min} = \min_{z\in \ZZ} z$. Recall from the proof for social cost with subsidy, we have for a particular set $\X$:

 \begin{align*}
      \text{cost}(\ZZ, \XX, \alpha) 
      &= (1 - \alpha)\cdot |\XX|\cdot w_R\cdot \left(\min_{z\in \ZZ} z - \tau \right)
    \end{align*}   
Plug it back to the definition of social cost difference at a certain subsidy level, we have:
\begin{align*}
    \textit{Diff}^{(\text{cost})}(\ZZ, \XX^{(g_0)}, \XX^{(g_1)};\alpha)&= \left|\text{cost}(\ZZ, \XX^{(g_1)}) - \text{cost}(\ZZ, \XX^{(g_0)})\right|\\
    &= \left|(1 - \alpha) \cdot |\XX_{-}^{(g_0)}| \cdot (\min_{z\in \ZZ} - \tau) - (1 - \alpha) \cdot |\XX_{-}^{(g_1)}| \cdot (\min_{z\in \ZZ} - \tau)\right|\\
    &= \left|(1 - \alpha) \cdot (|\XX_{-}^{(g_0)}| - |\XX_{-}^{(g_1)}|)\cdot (\min_{z\in \ZZ} - \tau)\right|
\end{align*}

which is monotonically decreasing as $\alpha$ increases.
    
\end{proof}

\section{Additional Experimental Results}
\label{sec:additional-experimental-results}

\textbf{Additional Experimental Setup}
To optimize the system's utility and select the optimal set of features to reveal, we use the local search-based method provided in \cite{orso2015submodular}.

\textbf{Experiments Compute Resources}
All the experiments were run on a MacBook Pro with Apple M1 chip and 8GB memory. To finish 100 runs on the adult and law datasets for 5 different subsidies, it took roughly 3 hours; the German credit dataset will take roughly 5-6 hours. 

\subsection{Additional result on recourse rate difference between groups as a function of different values of subsidies}

In Figure \ref{fig:lr_rec_rate_diff_sub}, we see the recourse rate difference between groups as a function of different values of subsidies. 
This figure serves to outline the parabolic nature relationship between subsidies and recourse rate difference.
As mentioned previously, only those with already low recourse costs can benefit from subsidies for smaller subsides.
Thus we see that smaller subsidies can initially result in greater disparity between agents, however, as subsides increases, they eventually decrease disparity to rates which are lower than the disparity without subsidies ($sub=0$).
Thus when deciding the amount of subsidies to choose, it is important for systems to be aware of the potential negative impacts (larger disparities between groups) that can result from smaller subsidies.

\label{sec:rec_ratio_diff_sub}
\begin{figure}[tb!]
        \includegraphics[width=\textwidth]{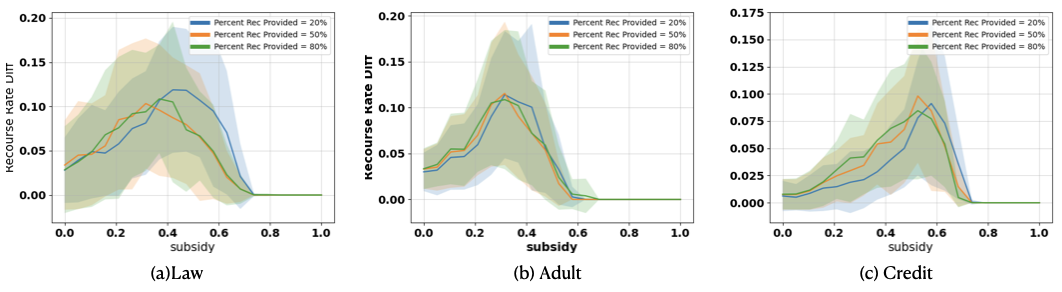}
     \caption{Recourse rate difference as a function of subsidy with 95\% confidence intervals. Each line corresponds to a different percentage of the population with provided recourse actions.}
     \label{fig:lr_rec_rate_diff_sub}
\end{figure}

\subsection{Additional Results Using Gradient Boosting Classifier}
In this section, we present further empirical findings obtained by employing a Gradient Boosting Decision Tree as the training method. Overall, we observed similar behavior compared with training and logistic regression. 

% recourse rate 
\begin{figure}[tb!]
    \centering
    \begin{subfigure}[b]{0.3\textwidth}
        \centering
        \includegraphics[width=\textwidth]{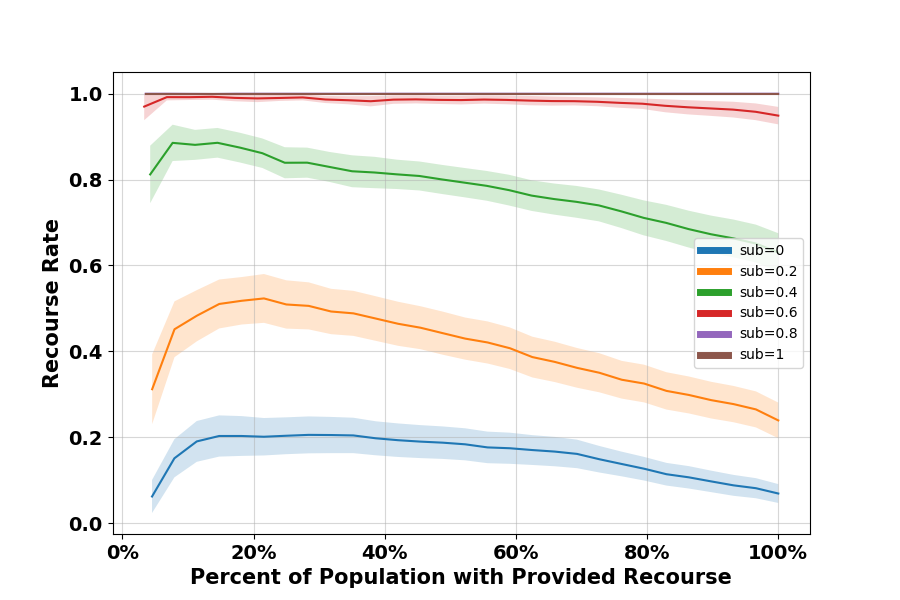}
        \caption{Law}
        \label{fig:law_rec_ratio}
    \end{subfigure}
    \hspace{-10pt}
    \begin{subfigure}[b]{0.3\textwidth}
        \centering
        \includegraphics[width=\textwidth]{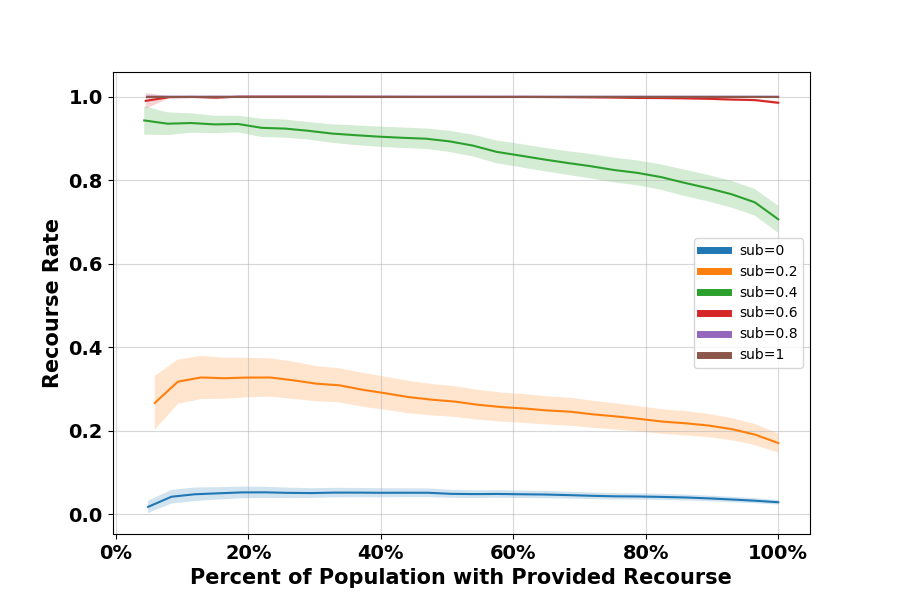}
        \caption{Adult}
        \label{fig:adult_rec_ratio}
    \end{subfigure}
    \hspace{-10pt}
    \begin{subfigure}[b]{0.3\textwidth}
        \centering
        \includegraphics[width=\textwidth]{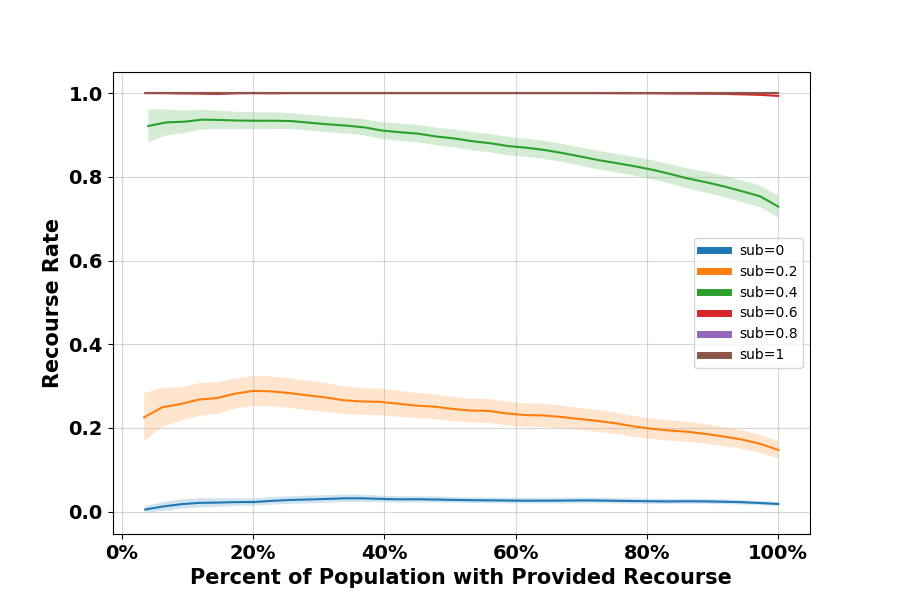}
        \caption{Credit}
        \label{fig:german_rec_ratio_gbc}
    \end{subfigure}
    \caption{Fraction of the population performing recourse, with 95\% confidence intervals. Each line corresponds to a different subsidy ratio ``subs", i.e., the cost reduction applied to recourse. }
    \label{fig:_gbc-recourse-rate}
\end{figure}

% manipulation rate 

\begin{figure}[tb]
    \centering
    \begin{subfigure}[b]{0.3\textwidth}
        \centering
        \includegraphics[width=\textwidth]{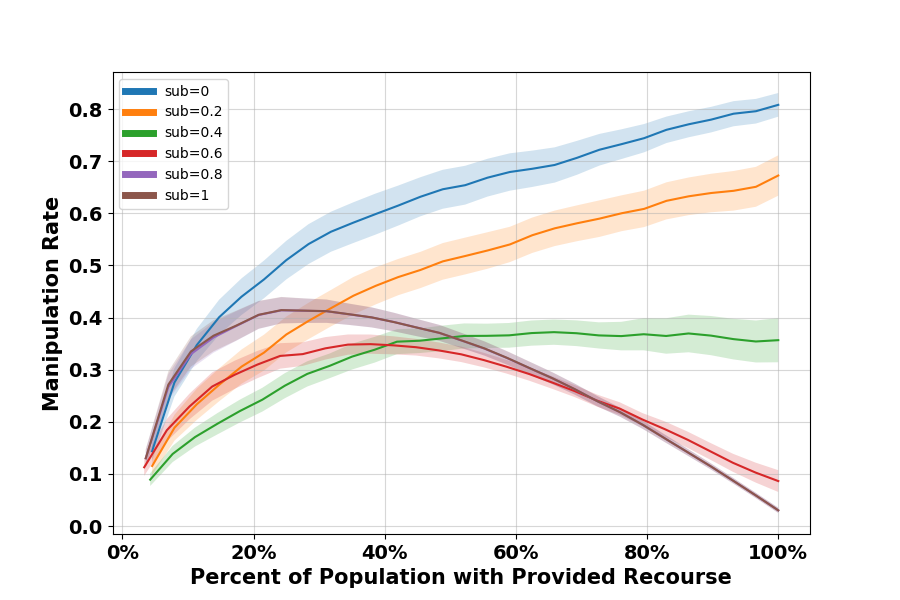}
        \caption{Law}
        \label{fig:law_man_ratio_gbc}
    \end{subfigure}
    \hspace{-10pt}
    \begin{subfigure}[b]{0.3\textwidth}
        \centering
        \includegraphics[width=\textwidth]{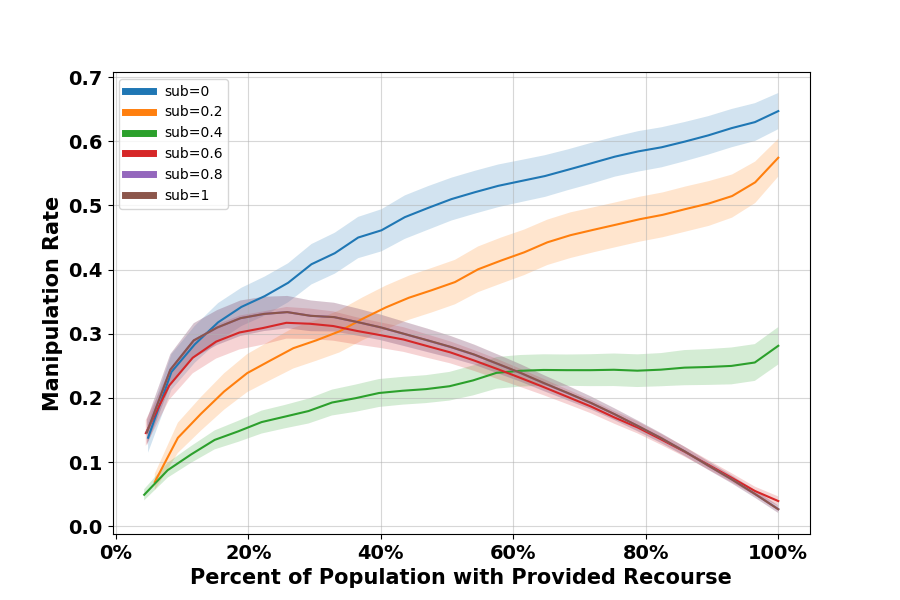}
        \caption{Adult}
        \label{fig:adult_man_ratio_gbc}
    \end{subfigure}
    \hspace{-10pt}
    \begin{subfigure}[b]{0.3\textwidth}
        \centering
        \includegraphics[width=\textwidth]{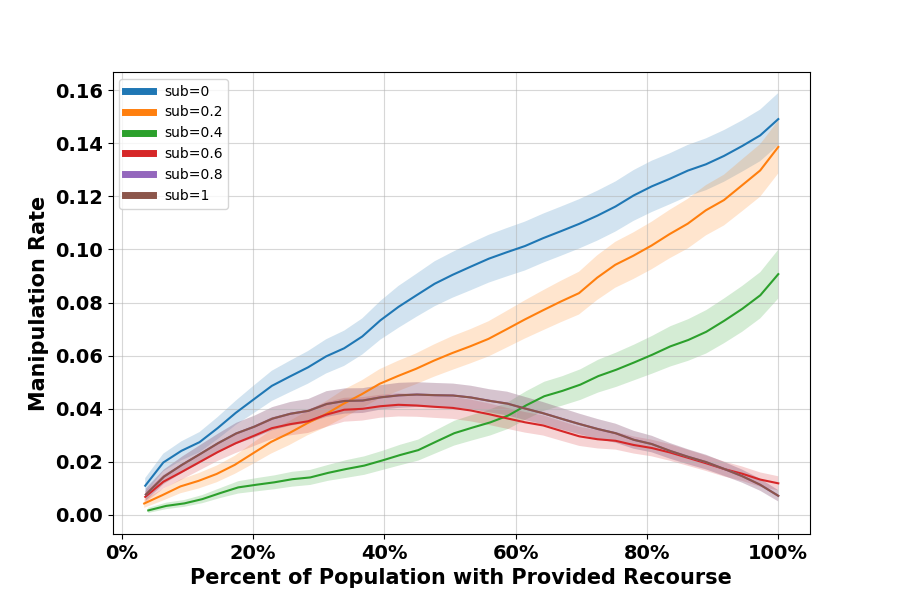}
        \caption{Credit}
        \label{fig:german_man_ratio_gbc}
    \end{subfigure}
    \caption{Fraction of the population performing manipulation, with 95\% confidence intervals. Each line corresponds to a different subsidy ratio ``subs", i.e., the cost reduction applied to recourse. }
    \label{fig:gbc-manipulation-rate}
\end{figure}

\begin{figure}[bt]
    \centering
    \begin{subfigure}[b]{0.3\textwidth}
        \centering
        \includegraphics[width=\textwidth]{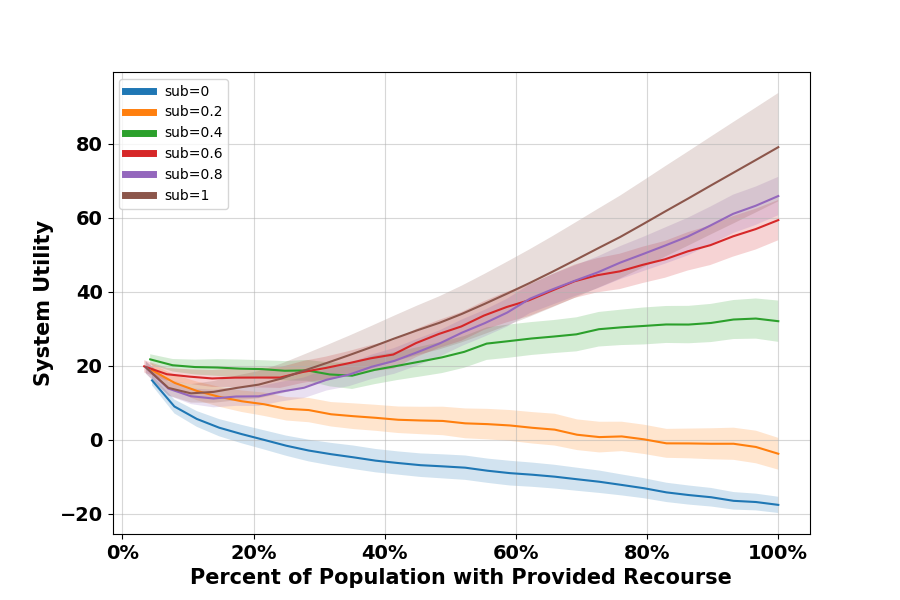}
        \caption{Law}
        \label{fig:law_utility_gbc}
    \end{subfigure}
    \hspace{-10pt}
    \begin{subfigure}[b]{0.3\textwidth}
        \centering
        \includegraphics[width=\textwidth]{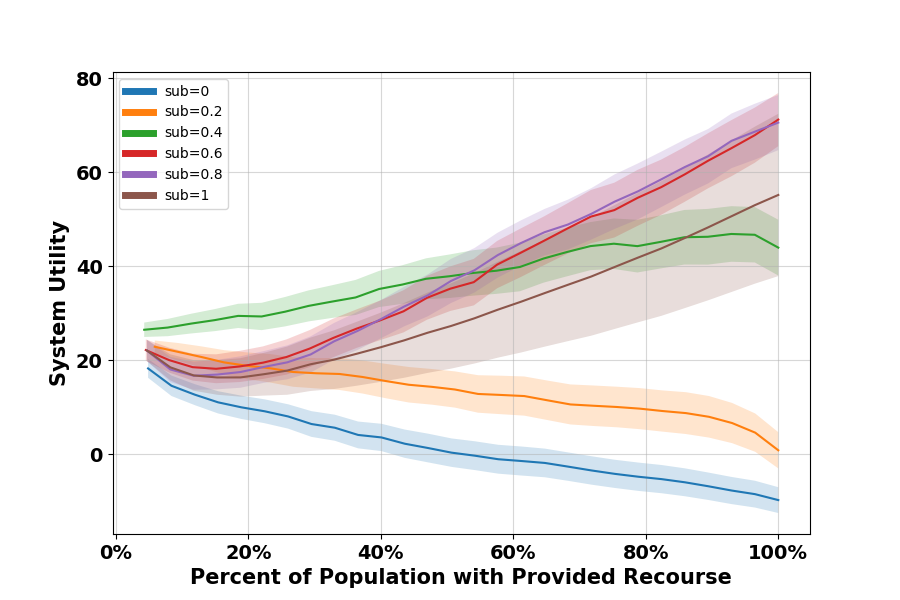}
        \caption{Adult}
        \label{fig:adult_utility_gbc}
    \end{subfigure}
    \hspace{-10pt}
    \begin{subfigure}[b]{0.3\textwidth}
        \centering
        \includegraphics[width=\textwidth]{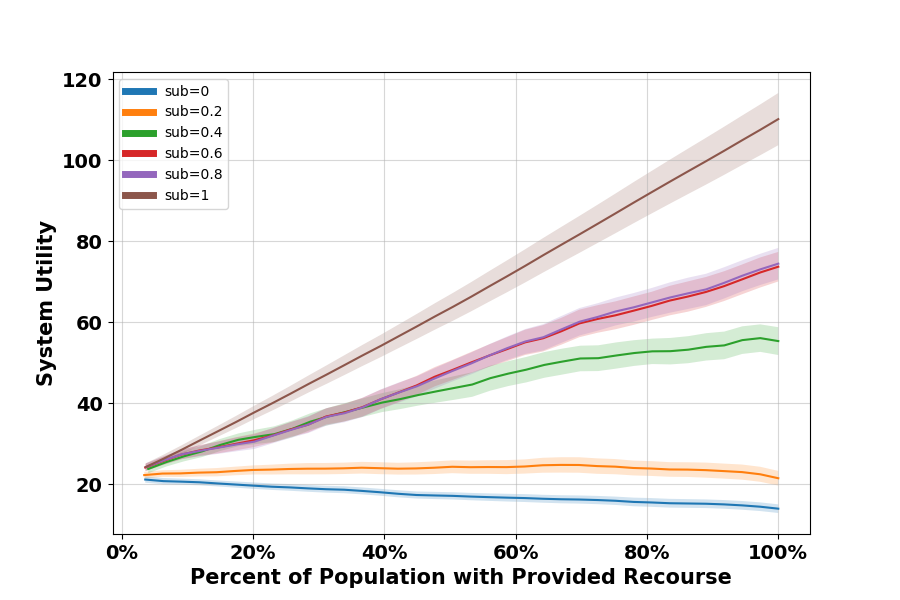}
        \caption{Credit}
        \label{fig:german_utility_gbc}
    \end{subfigure}
    \caption{The system's utility as a function of the population percentage with provided recourse, with 95\% confidence intervals. Each line corresponds to a different subsidy ratio ``subs", i.e., the cost reduction applied to recourse. }
    \label{fig:gbc-utility}
\end{figure}

\begin{figure}[bt]
    \centering
    \begin{subfigure}[b]{0.3\textwidth}
        \centering
        \includegraphics[width=\textwidth]{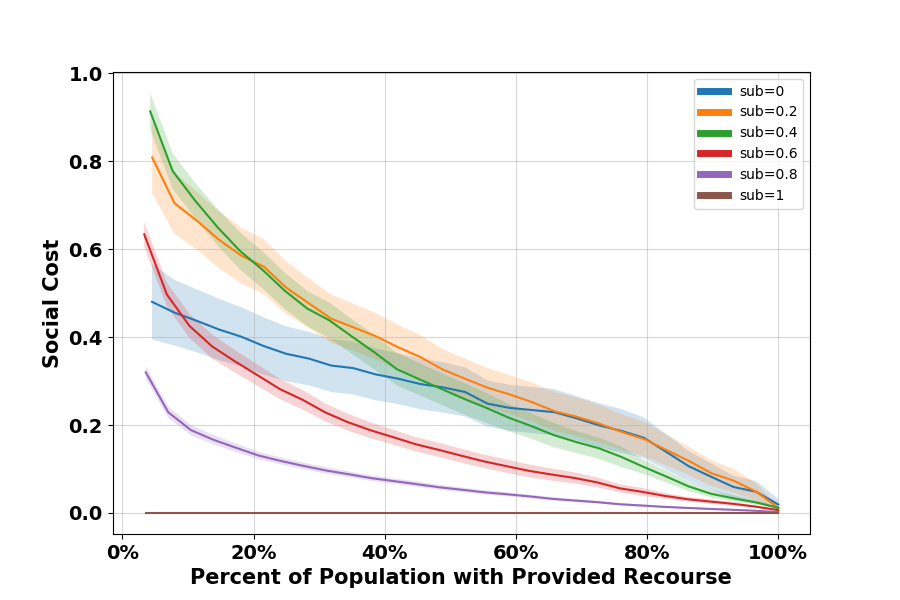}
        \caption{Law}
        \label{fig:law_social_cost_gbc}
    \end{subfigure}
    \hspace{-10pt}
    \begin{subfigure}[b]{0.3\textwidth}
        \centering
        \includegraphics[width=\textwidth]{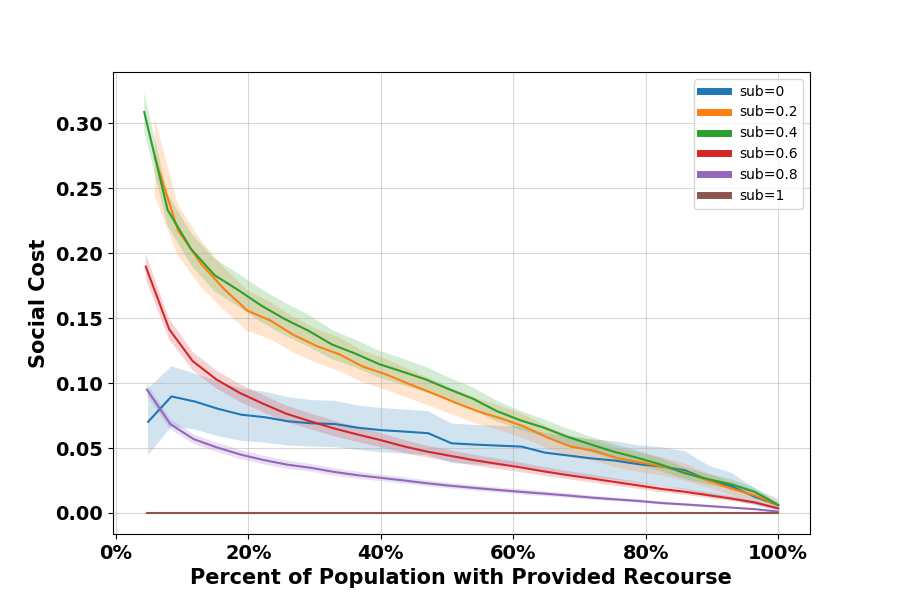}
        \caption{Adult}
        \label{fig:adult_social_cost}
    \end{subfigure}
    \hspace{-10pt}
    \begin{subfigure}[b]{0.3\textwidth}
        \centering
        \includegraphics[width=\textwidth]{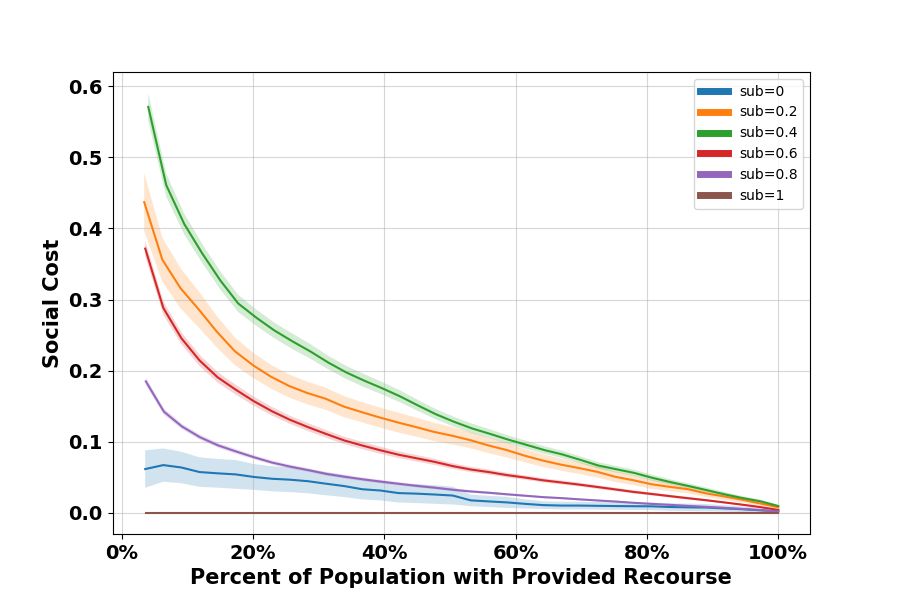}
        \caption{Credit}
        \label{fig:german_social_cost_gbc}
    \end{subfigure}
    \caption{The social cost as a function of the population percentage with provided recourse, with 95\% confidence intervals. Each line corresponds to a different subsidy ratio ``subs", i.e., the cost reduction applied to recourse. }
    \label{fig:gbc-social-cost}
\end{figure}

\begin{figure}[bt]
    \centering
    \begin{subfigure}[b]{0.3\textwidth}
        \centering
        \includegraphics[width=\textwidth]{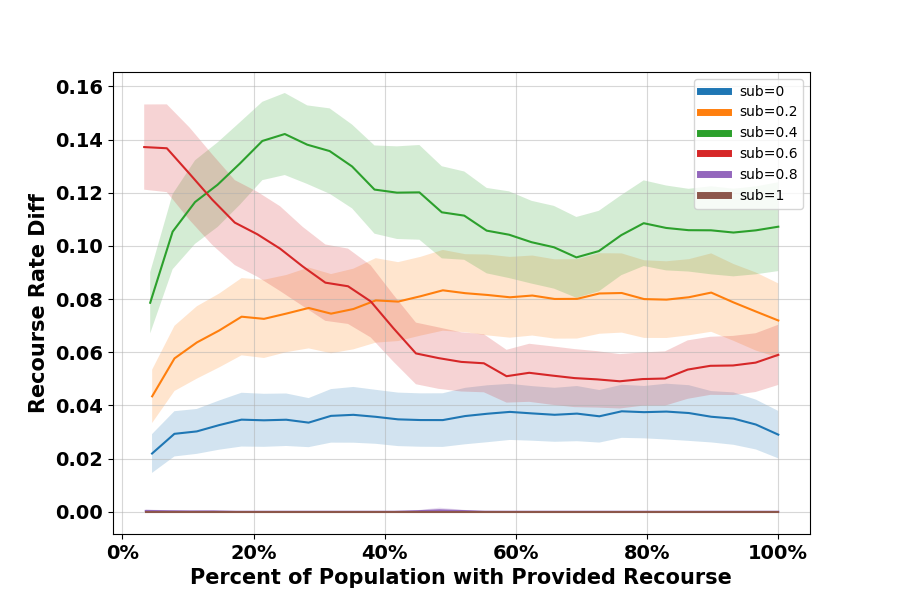}
        \caption{Law}
        \label{fig:law_rec_rate_diff_gbc}
    \end{subfigure}
    \hspace{-10pt}
    \begin{subfigure}[b]{0.3\textwidth}
        \centering
        \includegraphics[width=\textwidth]{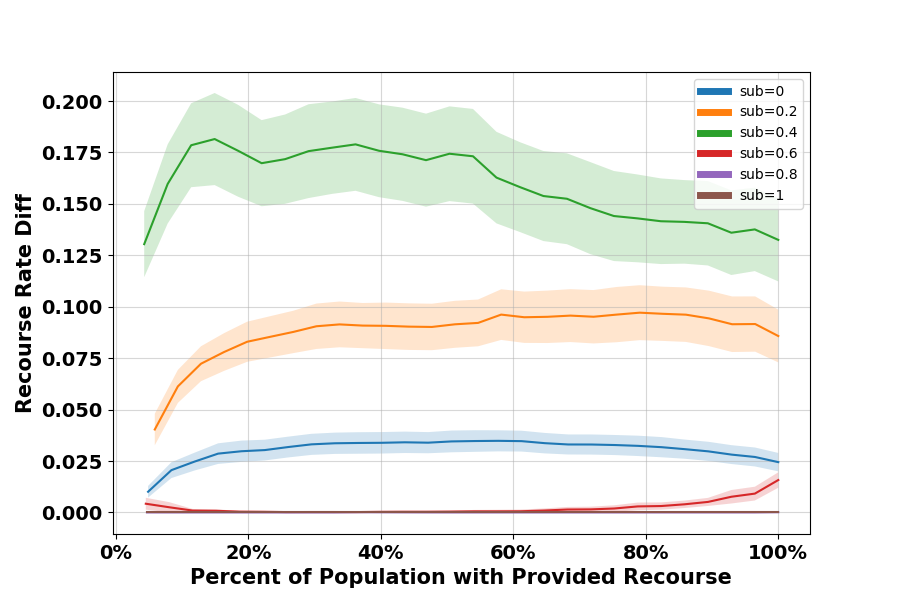}
        \caption{Adult}
        \label{fig:adult_rec_rate_diff_gbc}
    \end{subfigure}
    \hspace{-10pt}
    \begin{subfigure}[b]{0.3\textwidth}
        \centering
        \includegraphics[width=\textwidth]{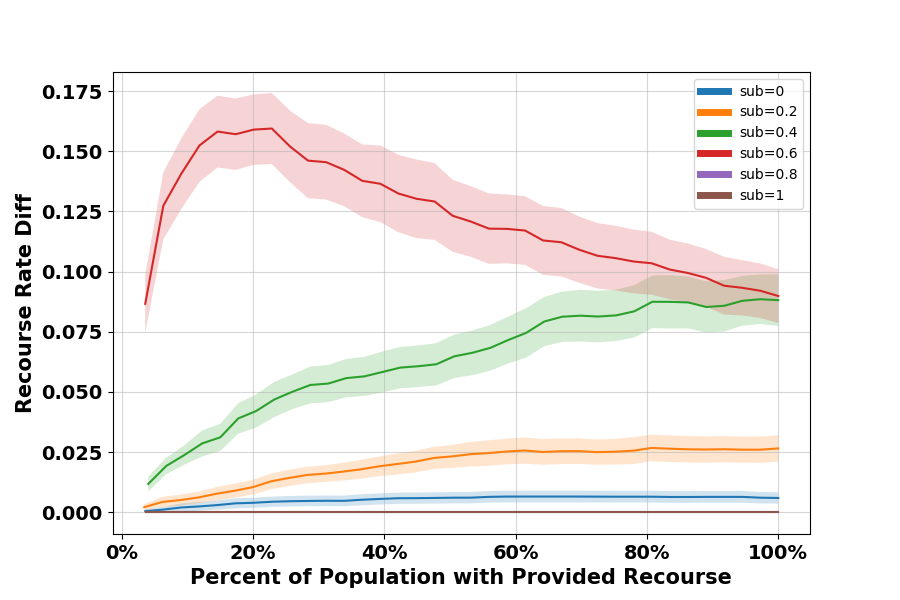}
        \caption{Credit}
        \label{fig:german_rec_rate_diff_gbc}
    \end{subfigure}
    \caption{Difference in recourse rate between different sensitive attribute groups with 95\% confidence intervals. Each line corresponds to a different subsidy ratio ``subs", i.e., the cost reduction applied to recourse. }
    \label{fig:gbc-rec-rate_diff}
\end{figure}

\begin{figure}[bt]
    \centering
    \begin{subfigure}[b]{0.3\textwidth}
        \centering
        \includegraphics[width=\textwidth]{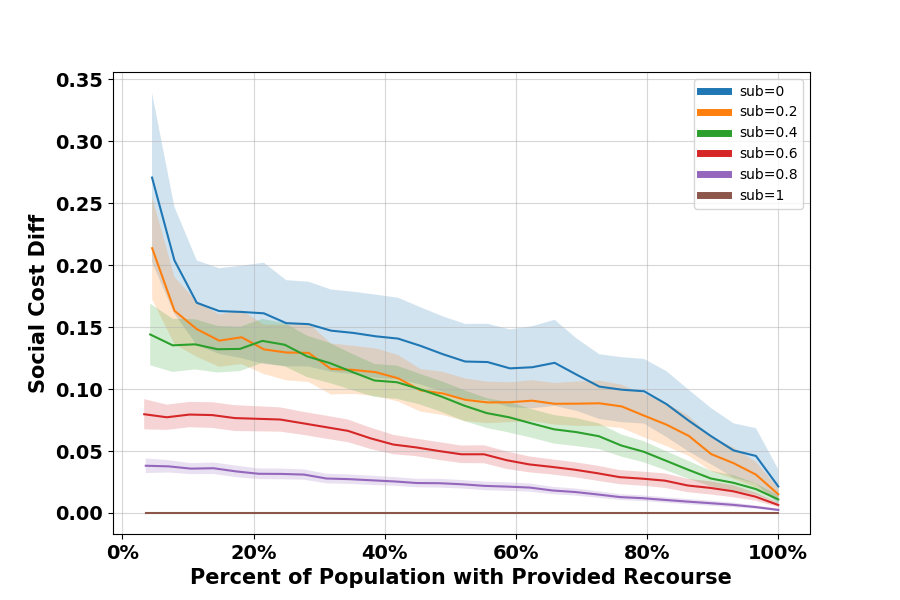}
        \caption{Law}
        \label{fig:law_social_cost_diff_gbc}
    \end{subfigure}
    \hspace{-10pt}
    \begin{subfigure}[b]{0.3\textwidth}
        \centering
        \includegraphics[width=\textwidth]{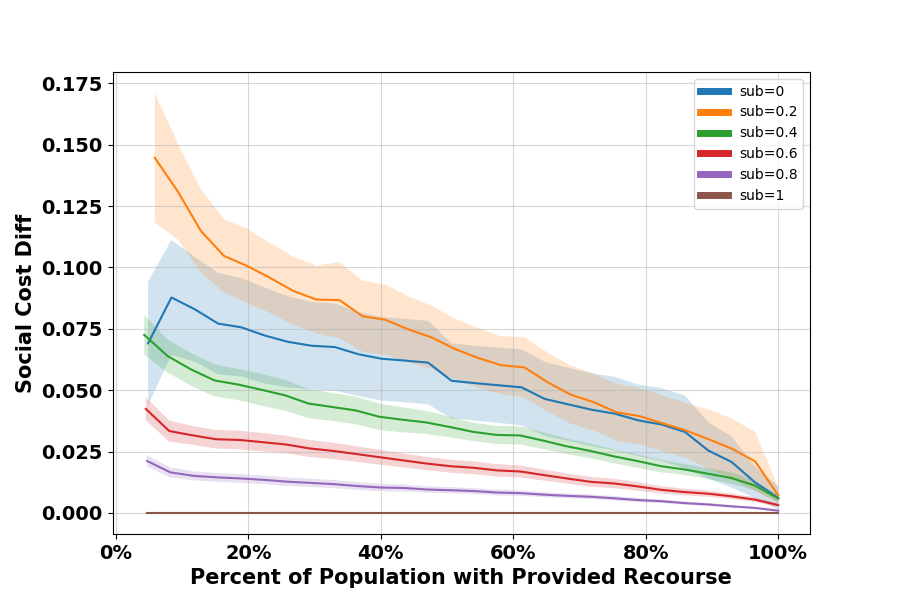}
        \caption{Adult}
        \label{fig:adult_social_cost_diff_gbc}
    \end{subfigure}
    \hspace{-10pt}
    \begin{subfigure}[b]{0.3\textwidth}
        \centering
        \includegraphics[width=\textwidth]{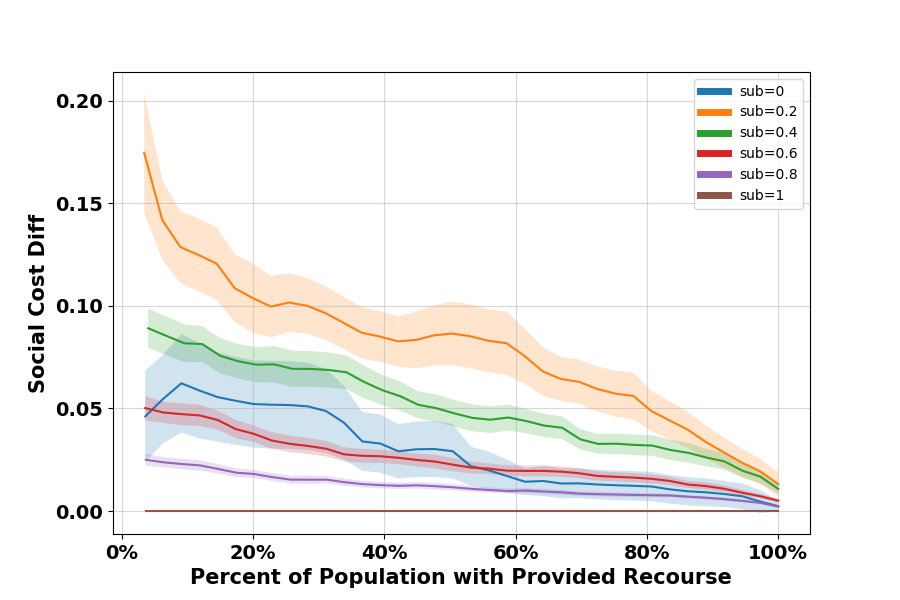}
        \caption{Credit}
        \label{fig:german_social_cost_diff_gbc}
    \end{subfigure}
    \caption{Difference in social cost between different sensitive attribute groups with 95\% confidence intervals. Each line corresponds to a different subsidy ratio ``subs", i.e., the cost reduction applied to recourse. }
    \label{fig:gbc-social-cost-diff}
\end{figure}

\begin{figure}[h!]
    \centering
    \begin{subfigure}[b]{0.3\textwidth}
        \centering
        \includegraphics[width=\textwidth]{Figures/law_social_cost_gbc.png}
        \caption{Law}
        \label{fig:law_social_cost_gbc}
    \end{subfigure}
    \hspace{-10pt}
    \begin{subfigure}[b]{0.3\textwidth}
        \centering
        \includegraphics[width=\textwidth]{Figures/adult_social_cost_gbc.png}
        \caption{Adult}
        \label{fig:adult_social_cost_gbc}
    \end{subfigure}
    \hspace{-10pt}
    \begin{subfigure}[b]{0.3\textwidth}
        \centering
        \includegraphics[width=\textwidth]{Figures/german_social_cost_gbc.png}
        \caption{Credit}
        \label{fig:german_social_cost_gbc}
    \end{subfigure}
    \caption{The social cost as a function of the population percentage with provided recourse, with 95\% confidence intervals. Each line corresponds to a different subsidy ratio ``subs", i.e., the cost reduction applied to recourse. }
    \label{fig:gcb-lr-social-cost}
\end{figure}

\newpage
\section{Boarder Impact}

\textbf{Boarder Impact:} By shedding light on the complex dynamics of recourse provision in automated systems, our paper challenges existing assumptions and reveals significant implications for both individuals and society as a whole. The identification of the natural tension between providing recourse and system exploitation highlights the delicate balance that must be maintained in algorithmic decision-making. This insight has profound consequences for fairness and equity, as strategic recourse withholding disproportionately affects vulnerable groups. Furthermore, the proposed framework offers a novel approach to analyzing the interplay of transparency, recourse, and manipulation, providing a valuable tool for future research in algorithmic fairness and accountability. The findings underscore the urgent need for policy interventions, such as recourse subsidies, to mitigate the adverse effects of system behavior on marginalized populations. Ultimately, this paper not only advances our theoretical understanding of algorithmic decision-making but also offers practical solutions to address the systemic biases inherent in automated systems.

\end{document}